\documentclass[a4paper]{article}

\usepackage[headinclude]{typearea}
\usepackage[english]{babel}           
\usepackage[utf8]{inputenc}
\usepackage[T1]{fontenc}
\usepackage{scrtime}
\usepackage{amsmath,amsthm,amsfonts,amssymb}
\usepackage {color}                                  
\usepackage {pifont}                                 
\usepackage {multicol}                               
\usepackage {times}                                  
\usepackage[numbers]{natbib}                                                     
\usepackage [scaled=.90]{helvet}                     
\usepackage {courier}                                
\usepackage {graphicx}                               
\usepackage {array}                                  
\usepackage {longtable}                              
\usepackage {makeidx}                                
\makeindex
\setlength{\parindent}{0cm}
\usepackage{fancyvrb}
\usepackage{enumerate} 
\usepackage{ifpdf}
\usepackage{caption}

\usepackage{setspace}

\usepackage{mathptmx}
\usepackage{marvosym}

\setlength{\voffset}{-2cm}
\setlength{\hoffset}{-1.5cm}
\setlength{\textheight}{24cm}
\setlength{\textwidth}{16cm}

\theoremstyle{plain}
\newtheorem{theorem}{Theorem}[section]
\newtheorem{cor}[theorem]{Corollary}
\newtheorem{lemma}[theorem]{Lemma}
\newtheorem{proposition}[theorem]{Proposition}

\theoremstyle{definition}

\newtheorem{definition}[theorem]{Definition}
\newtheorem{remark}[theorem]{Remark}
\newtheorem{conjecture}{Conjecture}

\newcommand{\E}{{\mathbb{E}}}

\newcommand{\N}{{\mathbb{N}}}
\renewcommand{\P}{{\mathbb{P}}}

\newcommand{\R}{{\mathbb{R}}}

\newcommand{\trace}{\mathrm{tr}}

\renewcommand{\P}{{\mathbb P}}

\newcommand{\cD}{{\cal D}}
\newcommand{\cE}{{\cal E}}
\newcommand{\cF}{{\cal F}}

\newcommand{\cL}{{\cal L}}

\newcommand{\be}{\begin{equation}}
\newcommand{\ee}{\end{equation}}
\newcommand{\bea}{\begin{eqnarray}}
\newcommand{\eea}{\end{eqnarray}}
\newcommand{\beast}{\begin{eqnarray*}}
\newcommand{\eeast}{\end{eqnarray*}}
\newcommand{\bproof}{\begin{proof}}
\newcommand{\eproof}{\end{proof}}

\bibliographystyle{plainnat}

\title{Bayesian Dividend Optimization and Finite Time Ruin Probabilities}

\author{Gunther Leobacher \and Michaela Sz\"olgyenyi\,\thanks{The author is supported by the Austrian Science Fund (FWF) Project P21943.} \and Stefan Thonhauser\,\thanks{Stefan Thonhauser is supported by Swiss National Science Foundation (SNF) Project 200021-124635/1.}}

\begin{document}

\date{January 2014}

\maketitle


\begin{abstract}
We consider the valuation problem of an (insurance) company under
partial information. Therefore we use the concept of maximizing discounted future dividend payments. The firm value process is described by a diffusion model with
constant and observable volatility and constant but unknown drift parameter.
For transforming the problem to a problem with complete information, we derive
a suitable filter. The optimal value function is characterized as the
unique viscosity solution of the associated Hamilton-Jacobi-Bellman equation.
We state a numerical procedure for approximating both the optimal dividend strategy and the corresponding value function. Furthermore, threshold strategies are discussed in some
detail. Finally, we calculate the probability of ruin in the
uncontrolled and controlled situation.
\end{abstract}

\vspace{5mm}
\noindent {\it Keywords:} dividend maximization, stochastic optimal control, filtering theory, viscosity solutions, finite time ruin probabilities\\
Mathematics Subject Classification (2010): 49L20, 91B30, 93E20


\centerline{\underline{\hspace*{16cm}}}

M. Sz\"olgyenyi \Letter \\
Department of Financial Mathematics, Johannes Kepler University Linz, 4040 Linz, Austria\\
michaela.szoelgyenyi@jku.at\\

G. Leobacher\\
Department of Financial Mathematics, Johannes Kepler University Linz, 4040 Linz, Austria\\

S. Thonhauser\\
Department of Actuarial Science, University of Lausanne, 1015 Lausanne, Switzerland

\centerline{\underline{\hspace*{16cm}}}


\section{Introduction}

In this paper we are going to study the valuation problem of an (insurance) company. We assume that the firm value process is given by a Brownian motion with drift, and  is absorbed when hitting zero.
In contrast to existing results, the drift parameter is modeled as an unobservable Bernoulli-type random variable and the company can only observe the evolution of its firm value.\\
In an insurance context \citet{definetti1957} proposed the expected discounted future dividend payments as a valuation principle for a homogeneous insurance portfolio. However, one can extend this concept to large companies, not necessarily being insurers. The accumulated dividends are described by an absolutely continuous process such that the company is capable of controlling its dividend rate with the aim of maximizing the expected value of accumulated discounted dividend payments.\\
In mathematical terms we face the problem of determining
\begin{align*}
\sup_{u\in A}\E_x\left(\int_0^\tau e^{-\delta t}u_t\,dt\right)\,,
\end{align*}
where the controlled firm value, controlled by some dividend rate $(u_t)_{t\geq0}$, and lifetime are given by
\begin{align*}
X_t&=x+\theta t+\sigma B_t-\int_0^t u_s\,ds\,,\\
\tau&=\inf\{t\geq 0\,\vert\,X_t\leq 0\}\,.
\end{align*}
The drift parameter is a random variable $\theta \in \lbrace \theta_1, \theta_2 \rbrace$ with known distribution.
In this so-called {\em Bayesian framework} the company can only observe the wealth process and thus faces an optimal control problem under partial information.
The a-priori unknown drift parameter expresses the company's uncertainty on the profitability of some business activities or an uncertainty on the general economic environment in addition to the basic risk represented by the Brownian component.\\

In the insurance context the unknown drift parameter can be interpreted as a residual uncertainty when using a diffusion
approximation of a classical risk reserve process instead of the original model.
For diffusion models with observable parameters this problem and several variants of it have been studied intensively, for example by
\citet{shreve1984}, \citet{jeanblanc1995}, \citet{radner1996}, and \citet{asmussen1997}.
For an overview, the interested reader may consult \citet{schmidli2008}, \citet{albrecher20092}, or \citet{avanzi2009}.
Two recent papers, \citet{jiang2012}, and \citet{sotomayor2011} deal with the dividend problem under a changing economic environment,
described by parameters driven by an {\em observable} Markov chain. However these models still assume full information and therefore differ from the model studied here.\\
The dividend maximization problem is also related to a pure optimal consumption problem of an economic agent. The agent is capable of controlling his/her consumption intensity.\\
Papers pointing towards optimal consumption problems arising in mathematical finance are \citet{hubalek2004} and \citet{grandits2007}, maximizing expected accumulated utility of dividends, and expected utility of accumulated dividends, respectively.\\
In corporate finance a similar problem appears, sometimes in combination with an optimal stopping problem, in liquidity risk modeling.
There the firm value process corresponds to the cash reserve process of a
company, the market value of shares of which is given by expected future
dividend payments, for instance see \citet{decamps2007}. In this framework an
uncertain drift parameter is taken up in \citet{decamps2012},
where the solution of a special case of an associated singular control problem is presented.\\
Models with partial information - in particular hidden Markov models - appear
quite frequently in the literature on portfolio optimization problems, e.g., by
\citet{karatzas2001}, \citet{rieder2005}, and \citet{sass2004},
whereas results relating to actuarial mathematics are more scarce. \citet{gerber1977} uses a
Brownian motion with unknown drift for modeling the value of a single
insurance policy, of which it is a-priori not known whether it is a good or bad
risk. He answers the question of when to optimally cancel the policy, i.e.,
when the insurer should decide that the risk actually corresponds to a bad one.
For a diffusion risk reserve process with parameters generated by a hidden
Markov chain, partial differential equations associated to finite time ruin
probabilities are derived
by \citet{elliott2011}.\\

The main contribution of the present paper is the complete analytical characterization of the
solution of the dividend maximization problem under partial information.
Furthermore, based on the analytical findings, we provide a numerical procedure for determining approximations of the optimal value function and dividend strategy. As a complement, following the path described by \citet{elliott2011},
we consider the associated finite time ruin probabilities for the uncontrolled and the controlled situation.\\
\\
The paper is organized as follows. In Section \ref{sec:Pre} we give a
mathematical description of the model and the optimization problem. In Section
\ref{sec:Filtering} we derive, by means of filtering theory, an estimator for the
unknown drift parameter to overcome uncertainty. The applicability of the
dynamic programming approach and the associated Hamilton-Jacobi-Bellman (HJB)
equation for the filtered optimization problem are given in Section
\ref{sec:HJB}. Section \ref{sec:Verif} contains the complete theoretical
characterization of the optimal value function as the unique viscosity solution
of the associated HJB equation. As the proofs are rather technical, they have been moved to the Appendix. In Section \ref{sec:Num} we describe a numerical method for
calculating approximations to the optimal value function and optimal dividend strategy.
Section \ref{sec:Threshold} deals with the special class of threshold strategies, under which dividends are paid only
if the firm value process exceeds a certain threshold level. The finite time ruin probabilities are considered in Section \ref{sec:Ruin}.
Section \ref{sec:Concl} concludes the paper.


\section{Preliminaries}
\label{sec:Pre}

In the whole paper we consider a filtered probability space $(\cE, \cF, \{\cF_t\}_{t \ge 0}, \P)$ carrying all stochastic quantities which will be introduced in the following.\\
As stated in the introduction, we assume that the firm value of a company is given by
\begin{equation}
\label{eq:dynX1}X_t=x+\int_0^t \theta \,ds+\sigma B_t - L_t\,,
\end{equation}
with initial capital $x>0$, where $\theta$ is the constant unobservable drift, $\theta \in \lbrace \theta_1, \theta_2 \rbrace$, $\theta_1 < \theta_2$, $\sigma$
is the constant and known volatility, and $B=(B_t)_{t \ge 0}$ is a standard Brownian motion.
The accumulated dividend process $L=(L_t)_{t\geq 0}$, $L_0=0$,
is assumed to admit a density $u=(u_t)_{t\geq 0}$, which is bounded, i.e., $u_t \in [0,K]$, $K>0$, such that
\begin{align}
\label{eq:dynL1}dL_t=u_t \,dt\,.
\end{align}
Note that $X$ always corresponds to a certain strategy $u$,
but for avoiding an elaborate notation, we will not make that explicit as long as it causes no ambiguities.\\
For modeling a further uncertainty of the company's firm value in addition to the Brownian component we assume that it is not possible
to observe the drift parameter directly, but we assume knowledge of its (initial) distribution $q:=\P(\theta=\theta_1)=1-\P(\theta=\theta_2)$.\\
We denote the uncontrolled firm value process by $Z=(Z_t)_{t\geq 0}$, which is given through
\begin{align}
\label{eq:dynZ1}Z_t&= x + \theta t+\sigma B_t\,,
\end{align}
and obviously $X_t=Z_t-L_t$.
The observation filtration $\{\cF^{Z}_t\}_{t\ge 0}$ is the augmentation of the filtration generated by $Z$.
This means that the company is able to observe the evolution of its uncontrolled firm value and based on that decides on the dividend strategy,
or equivalently observes controlled firm value and accumulated dividends.\\
\\
The value process associated with a dividend strategy $u$ is defined as
\[
J_t^{(u)}:=\E\left(\int_t^{\tau}e^{-\delta(s-t)}u_s \,ds \, \vline \, \cF_t^{Z}\right)\,,
\]
where $\tau:=\inf\{s\ge t \, \vline \, X_s\le 0\}$ is the time of ruin of $X$
for the corresponding dividend strategy $u$. $\tau$ depends on the strategy $u$
via $X$, and again we will not make this explicit if there is no danger of
confusion.\\
The optimal value process of the optimization problem under study is given by
\[
V_t = \sup_{u\in A}\E\left(\int_t^{\tau}e^{-\delta(s-t)}u_s \,ds \, \vline \,\cF_t^{Z}\right)\,,
\]
where $A$ denotes the set of admissible controls, for which we take the set
of all $\{\cF^Z_t\}_{t\ge 0}$-progressively measurable and $[0,K]$-valued processes.
Naturally, an optimal strategy $u^*$ is characterized by $V_t= J_t^{(u^*)}$.\\
\\
Since we cannot observe the two sources $\theta$ and $B_t$ of uncertainty separately, we face a stochastic optimization problem under partial information.
For overcoming this difficulty we are going to derive an observable estimator for the drift parameter by means of filtering theory in the following section.


\section{Filtering theory}
\label{sec:Filtering}
Our aim is to rewrite \eqref{eq:dynX1} as
\begin{equation}
\label{eq:dynX2_1}X_t=x+\int_0^t(\theta_s-u_s) \,ds+\sigma W_t\,,
\end{equation}
where $(\theta_t)_{t \ge 0}$ is an observable estimator for $\theta$ at time $t$ and $(W_t)_{t \ge 0}$ is a Brownian
motion w.r.t. our observation filtration.\\
\begin{remark}
In \citet[p. 225]{liptser2000} the maximum likelihood estimator for the unknown drift parameter in the present situation is given
by $\hat{\theta}_t=\frac{Z_t-z}{t}$. One may notice that this estimator only uses the information which is given by $Z_t$ and does not consider
the whole path up to time $t$. In the following we are going to derive an alternative estimator for $\theta$ which is based on an
application of Bayes' rule. This estimator obeys an appealing representation in terms of a stochastic integral and induces natural boundary conditions for
the optimization problem.
\end{remark}
Using
\[
\P(Z_t \in [\bar{z},\bar{z}+d\bar{z}] | \theta=\theta_j)
=\frac{1}{\sqrt{2\pi}\sigma\sqrt{t}}\exp\left(-\frac{(\bar{z}-\theta_jt-z)^2}{2\sigma^2t}\right)\,dz\,,
\]
and the law of total probability we arrive at
\begin{align*}
\P(Z_t \in [\bar{z},\bar{z}+d\bar{z}])
=\left( q\frac{1}{\sqrt{2\pi}\sigma\sqrt{t}}\exp\left(-\frac{(\bar{z}-\theta_1t-z)^2}{2\sigma^2t}\right)
+(1-q) \frac{1}{\sqrt{2\pi}\sigma\sqrt{t}}\exp\left(-\frac{(\bar{z}-\theta_2t-z)^2}{2\sigma^2t}\right)\right)\,dz\,.
\end{align*}
Now Bayes' theorem allows us to determine
\begin{align*}
\P(\theta=\theta_1 \, \vline \, Z_t \in [\bar{z},\bar{z}+d\bar{z}])
&=\frac{q\frac{1}{\sqrt{2\pi}\sigma\sqrt{t}}\exp\left(-\frac{(\bar{z}-\theta_1t-z)^2}{2\sigma^2t}\right)\,dz}{\left( q\frac{1}{\sqrt{2\pi}\sigma\sqrt{t}}\exp\left(-\frac{(\bar{z}-\theta_1t-z)^2}{2\sigma^2t}\right)
+(1-q) \frac{1}{\sqrt{2\pi}\sigma\sqrt{t}}\exp\left(-\frac{(\bar{z}-\theta_2t-z)^2}{2\sigma^2t}\right)\right)\,dz
}\\
&=\frac{1}{1
+\frac{1-q}{q}\exp\left(\frac{(\theta_2-\theta_1)(\bar{z}-z-\frac{1}{2}(\theta_1+\theta_2)t)}{\sigma^2}\right)
}\,,
\end{align*}
and to compute $\theta_t=\E(\theta \, \vline \, Z_t \in [\bar{z},\bar{z}+d\bar{z}])$.
Setting $\theta_t=h(t,Z_t)$, we finally get
\begin{align*}
h(t,\bar{z})
=\theta_1 \frac{1}{1+f(t,\bar{z})}+\theta_2\frac{f(t,\bar{z})}{1+f(t,\bar{z})}
=\theta_1 +(\theta_2-\theta_1)\frac{f(t,\bar{z})}{1+f(t,\bar{z})}\,,
\end{align*}
\begin{align*}
f(t,\bar{z})=\frac{1-q}{q}\exp\left(\frac{(\theta_2-\theta_1)(\bar{z}-z-\frac{1}{2}(\theta_1+\theta_2)t)}{\sigma^2}\right)\,.
\end{align*}
As a consequence we can state the following Lemma.
\begin{lemma}\label{lem:Ztheta}
$Z_t$ and $\theta_t$ are connected via
\be\label{eq:zexpl}
Z_t=\frac{\sigma^2\log\frac{(\theta_t-\theta_1)q}{(\theta_2-\theta_t)(1-q)}}{(\theta_2-\theta_1)}+z+\frac{1}{2}(\theta_1+\theta_2)t\,.
\ee
In particular, $(\theta_t)_{t \ge 0}$ is adapted to the observation filtration.
\end{lemma}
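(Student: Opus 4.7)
The plan is to directly invert the explicit relation $\theta_t = h(t,Z_t)$ already derived in the excerpt. Since
\[
h(t,\bar z) = \theta_1 + (\theta_2-\theta_1)\frac{f(t,\bar z)}{1+f(t,\bar z)},
\]
with $f(t,\bar z)>0$, the map $\bar z \mapsto h(t,\bar z)$ is strictly monotone and takes values in $(\theta_1,\theta_2)$, hence is a bijection onto that open interval. So $Z_t$ can be recovered pathwise from $\theta_t$ by solving $h(t,Z_t)=\theta_t$ for $Z_t$.

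First I would carry out this algebraic inversion: set $f := f(t,Z_t)$ and read off
\[
\theta_t-\theta_1 = (\theta_2-\theta_1)\frac{f}{1+f}, \qquad \theta_2-\theta_t = (\theta_2-\theta_1)\frac{1}{1+f},
\]
so dividing the two identities gives $f = (\theta_t-\theta_1)/(\theta_2-\theta_t)$. Substituting this into the definition of $f$, multiplying by $q/(1-q)$ and taking logarithms yields
\[
\log\frac{(\theta_t-\theta_1)q}{(\theta_2-\theta_t)(1-q)} = \frac{(\theta_2-\theta_1)\bigl(Z_t-z-\tfrac12(\theta_1+\theta_2)t\bigr)}{\sigma^2},
\]
from which \eqref{eq:zexpl} follows after solving for $Z_t$.

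For the final claim, since $h(t,\cdot)$ is a fixed deterministic Borel function and $Z_t$ is $\cF^Z_t$-measurable by definition of the observation filtration, $\theta_t = h(t,Z_t)$ is $\cF^Z_t$-measurable for every $t$, so $(\theta_t)_{t\ge 0}$ is adapted to $\{\cF^Z_t\}_{t\ge 0}$. There is no real obstacle here; the only thing to double-check is that $\theta_t$ genuinely lies in $(\theta_1,\theta_2)$ almost surely (guaranteed by $q\in(0,1)$ and $f>0$), so that the logarithm in \eqref{eq:zexpl} is well-defined.
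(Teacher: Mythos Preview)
Your proposal is correct and is precisely the approach the paper has in mind: the lemma is stated immediately after the explicit formulas for $h(t,\bar z)$ and $f(t,\bar z)$ as a direct consequence, and your argument simply makes the algebraic inversion explicit. There is nothing to add.
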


From \citet[Theorem 9.1]{liptser1977} we get that
\[
W_t=\frac{1}{\sigma} \left( \theta t - \int_0^t \theta_s \,ds \right) +B_t
\]
is an $\{\cF^{Z}_t\}_{t\ge 0}$-Brownian motion, sometimes referred to as innovation process.
Therefore we can rewrite \eqref{eq:dynX1} and \eqref{eq:dynZ1} as
\begin{align}
\label{eq:dynX2_2}X_t&= x + \int_0^t (\theta_s-u_s) \,ds +\sigma W_t\,,\\
\label{eq:dynZ2}Z_t&= z + \int_0^t \theta_s \,ds +\sigma W_t\,.
\end{align}
Due to \eqref{eq:zexpl}, $\cF^{Z}_t=\cF^{X,\theta}_t$ for all $t\ge 0$, 
where $\{\cF^{X,\theta}_t\}_{t\ge 0}$ is the augmented filtration
generated by $X$ and $(\theta_t)_{t\ge0}$.
Using It\^o's formula we derive \eqref{eq:dyntheta}, such that from now on we can consider the following system of state variables
\begin{align}
\label{eq:dynX2}X_t&= x + \int_0^t (\theta_s-u_s) \,ds +\sigma W_t\,,\\
\label{eq:dyntheta}\theta_t&= \vartheta + \frac{1}{\sigma}\int_0^t (\theta_s-\theta_1)(\theta_2-\theta_s) \,dW_s\,.
\end{align}

\begin{remark}
Let $u$ be any progressively measurable bounded process.
Equation \eqref{eq:dyntheta} does not depend on $u$ and therefore has a
solution by the well-known theorem on existence and uniqueness of solutions
of SDEs with Lipschitz coefficients. See, for example, 
\citet[Chapter 2, Theorem 7]{krylov1980}.  

$X$ does not appear on the right hand side of \eqref{eq:dynX2}, so 
this becomes just an ordinary integral.
\end{remark}



\section{The Hamilton-Jacobi-Bellman equation}
\label{sec:HJB}
In this section we will show that the dynamic programming approach is applicable for solving the optimization problem when considering
$(X_t,\theta_t)$ as state variables.\\
Remember the definition of the optimal value process
\[
V_t
:=\sup_{u\in A}\E\left(\int_t^{\tau}e^{-\delta(s-t)}u_s \,ds \, \vline \, \cF_t^{Z}\right)
=\sup_{u\in A}\E\left(\int_t^{\tau}e^{-\delta(s-t)}u_s \,ds \, \vline \, \cF_t^{X,\theta}\right)\,.
\]
The system \eqref{eq:dynX2} and \eqref{eq:dyntheta} describes autonomous
state dynamics in the sense of \cite[Section IV.5]{fleming2006}.
It is therefore 
natural to consider Markov controls in the following.  Furthermore, due to the
Markovian structure and the infinite horizon, 
we
get that
\begin{align*}
V_t&
=\sup_{u\in A}\E\left(\int_t^{\tau}e^{-\delta(s-t)}u_s \,ds \, \vline \, \cF^{X,\theta}_t\right)
=\sup_{u\in A}\E\left(\int_t^{\tau}e^{-\delta(s-t)}u_s \,ds \, \vline \, X_t,\theta_t\right)
=V(X_t,\theta_t)\,, 
\end{align*}
a.s., where from now on $\tau:=\inf\{t\ge 0 \, \vline \, X_t\le 0\}$ and 
\[
V(x,\vartheta):=\sup_{u\in A}\E\left(\int_0^{\tau}e^{-\delta s}u_s \,ds \, \vline \, X_0=x, \theta_0=\vartheta\right)
\]
denotes the optimal value function of the optimization problem.
For a strategy $u\in A$ we define 
\[
J^{(u)}(x,y):=\E\left(\int_0^{\tau}e^{-\delta s}u_s \,ds \, \vline \, X_0=x, \theta_0=\vartheta\right)\,.
\]
From now on we abbreviate the expectation given the initial values $X_0=x$ and $\theta_0=\vartheta$ by $\E_{x,\vartheta}$.

\begin{remark}
From \citet[Chapter 3, Theorem 5]{krylov1980} we know that the 
optimal value function $V$ is continuous. 
\end{remark}

\begin{lemma}\label{th:x-infinity}
We have $0\le V \le \frac{K}{\delta}$ and 
$\lim_{x \to \infty} V(x,\vartheta)=\frac{K}{\delta}$ uniformly in $\vartheta$.
Furthermore, $V$ is increasing in both parameters.
\end{lemma}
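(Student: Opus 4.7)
The plan is to prove the three assertions separately, with the pathwise comparison principle doing most of the work.

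For the bounds $0\le V\le K/\delta$, I would note that since $u_s\ge 0$, the integrand is non-negative, so $V\ge 0$; and since $u_s\le K$,
\[
\E_{x,\vartheta}\left(\int_0^{\tau}e^{-\delta s}u_s\,ds\right)\le K\int_0^{\infty}e^{-\delta s}\,ds=\frac{K}{\delta}
\]
for every $u\in A$, hence $V\le K/\delta$. For the limit, I would choose the specific admissible strategy $u\equiv K$ and show that the associated $J^{(K)}(x,\vartheta)$ already converges to $K/\delta$ uniformly in $\vartheta$. Under $u\equiv K$, equation \eqref{eq:dynX2} gives $X_t=x+\int_0^t(\theta_s-K)\,ds+\sigma W_t$. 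Because $[\theta_1,\theta_2]$ is invariant for \eqref{eq:dyntheta} (the diffusion coefficient vanishes at both endpoints; alternatively, $\theta_s$ is a conditional expectation of a $\{\theta_1,\theta_2\}$-valued random variable), we have $\theta_s\ge\theta_1$ almost surely, so
\[
X_t\ge Y_t:=x+(\theta_1-K)t+\sigma W_t.
\]
Hence $\tau\ge\tau_Y:=\inf\{t\ge 0\,\vert\,Y_t\le 0\}$, and $\tau_Y$ is independent of $\vartheta$. The Laplace transform of the first-passage time of a Brownian motion with drift is the standard exponential formula $\E(e^{-\delta\tau_Y})=e^{-\alpha x}$, where $\alpha=\sigma^{-2}\bigl(K-\theta_1+\sqrt{(K-\theta_1)^2+2\sigma^2\delta}\bigr)>0$. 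Therefore
\[
V(x,\vartheta)\ge J^{(K)}(x,\vartheta)=\frac{K}{\delta}\,\E_{x,\vartheta}\bigl(1-e^{-\delta\tau}\bigr)\ge\frac{K}{\delta}\bigl(1-e^{-\alpha x}\bigr)\xrightarrow{x\to\infty}\frac{K}{\delta},
\]
and the convergence is uniform in $\vartheta$ because $\alpha$ does not depend on $\vartheta$. Combined with $V\le K/\delta$ this proves the uniform limit.

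For monotonicity I would use the fact that controls are adapted to the common observation filtration $\{\cF_t^Z\}=\{\cF_t^W\}$ and therefore can be applied to any initial condition via the same driving Brownian motion $W$. Fix an arbitrary $u\in A$. If $x_1<x_2$, the two controlled processes starting from $x_1$ and $x_2$ satisfy $X_t^{(x_2)}-X_t^{(x_1)}=x_2-x_1>0$, so $\tau^{(x_2)}\ge\tau^{(x_1)}$ and, since the integrand is non-negative, $J^{(u)}(x_2,\vartheta)\ge J^{(u)}(x_1,\vartheta)$; taking the supremum yields monotonicity in $x$. For monotonicity in $\vartheta$, I would invoke the pathwise comparison theorem for one-dimensional SDEs applied to \eqref{eq:dyntheta}: the coefficient $\sigma^{-1}(\theta-\theta_1)(\theta_2-\theta)$ is Lipschitz on $[\theta_1,\theta_2]$ and the drift is zero, so $\vartheta_1\le\vartheta_2$ implies $\theta_s^{(\vartheta_1)}\le\theta_s^{(\vartheta_2)}$ for all $s\ge 0$ almost surely. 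Consequently $X_t^{(x,\vartheta_2)}-X_t^{(x,\vartheta_1)}=\int_0^t(\theta_s^{(\vartheta_2)}-\theta_s^{(\vartheta_1)})\,ds\ge 0$, hence $\tau^{(x,\vartheta_2)}\ge\tau^{(x,\vartheta_1)}$, and the same argument gives $V(x,\vartheta_2)\ge V(x,\vartheta_1)$.

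The only delicate point is the uniformity in $\vartheta$ of the limit as $x\to\infty$, and the key observation that makes it easy is precisely the invariance of $[\theta_1,\theta_2]$ for \eqref{eq:dyntheta}: it produces the $\vartheta$-free lower comparison process $Y_t$ whose explicit exponential hitting-time Laplace transform gives the desired uniform rate $e^{-\alpha x}$. Everything else is a direct consequence of pathwise coupling and the non-negativity of the running reward.
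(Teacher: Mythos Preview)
Your proof is correct and follows essentially the same route as the paper's: bound $V$ below by the value of the constant strategy $u\equiv K$, compare the resulting process with the drifted Brownian motion $Y_t=x+(\theta_1-K)t+\sigma W_t$ using $\theta_s\ge\theta_1$, and invoke the explicit Laplace transform of its first-passage time; the monotonicity argument via pathwise coupling is exactly what the paper alludes to but does not spell out. One minor slip: the exponent in $\E(e^{-\delta\tau_Y})=e^{-\alpha x}$ should be $\alpha=\sigma^{-2}\bigl((\theta_1-K)+\sqrt{(\theta_1-K)^2+2\sigma^2\delta}\bigr)$, not with $K-\theta_1$ in the linear term, but since your expression is still positive and independent of $\vartheta$, the argument is unaffected.
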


\begin{proof}
Clearly, $0\le V(x,\vartheta)\le \int_0^\infty K e^{-\delta s}ds=\frac{K}{\delta}$.
On the other hand, we have 
$V(x,\vartheta)\ge \E_{x,\vartheta}(\int_0^\tau e^{-\delta t} K \,dt)$,
where 
\begin{align*}
(X_0,\theta_0)&=(x,\vartheta)\,,\\
dX_t&=(\theta_t-K)dt+\sigma dW_t\,,\\
d\theta_t&=\frac{1}{\sigma}(\theta_t-\theta_1)(\theta_2-\theta_t)dW_t\,,\\
\tau&=\inf\{t\ge 0 \, \vline \, X_t \le 0\}\,.
\end{align*}
Since we always assume $\theta_1<\theta_2$,
\[
X_t=x+\int_0^t (\theta_s-K)\,ds+\sigma W_t
\ge x+\int_0^t (\theta_1-K)\,ds+\sigma W_t=:X^{\min}_t
\]
and therefore $\tau^{\min}:=\inf\{t\ge 0 \, \vline \, X^{\min}_t \le 0\}\le \tau$
such that 
\[
V(x,\vartheta)\ge \E_{x,\vartheta}\left(\int_0^{\tau^{\min}} e^{-\delta t} K \,dt\right)
=\frac{K}{\delta}\left(1-\E_{x,\vartheta}\left(e^{-\delta \tau^{\min}}\right)\right)\,.
\]
The last expectation can be computed using standard techniques:
for every $\lambda>0$ the process defined by $M_t:=e^{-\lambda W_t-\frac{\lambda^2}{2}t}$ is a martingale and the stopped 
process $M^{\tau^{\min}}$ is a bounded martingale, such that 
$\E_{x,\vartheta}(e^{\lambda \frac{x+(\theta_1-K) \tau^{\min}}{\sigma}-\frac{\lambda^2}{2}{\tau^{\min}}})=\E_{x,\vartheta}(e^{-\lambda W_{\tau^{\min}}-\frac{\lambda^2}{2}{\tau^{\min}}})= 1$,
and hence
$\E_{x,\vartheta}(e^{\lambda (\frac{(\theta_1-K) }{\sigma}-\frac{\lambda}{2}){\tau^{\min}}})=e^{-\lambda \frac{x}{\sigma}}$. We therefore get
$ \E_{x,\vartheta}\left(e^{-\delta \tau^{\min}}\right)=e^{-\lambda x}\,$, where
\be\label{eq:lambda}
\lambda:=\frac{(\theta_1-K)+\sqrt{(\theta_1-K)^2+2 \delta \sigma^2}}{\sigma^2}\,.
\ee
The monotonicity of $V$ with respect to both parameters follows from a
pathwise argument similar to that above. Thereby one has to keep in mind
the assumption $\theta_1<\theta_2$.
\end{proof}

Formally applying It\^o's formula to $V$ gives
\begin{align*}
dV(X_{t},\theta_{t})&=
V_xdX_t+\frac{1}{2}V_{xx}(dX_t)^2+V_{x\vartheta}dX_t d\theta_t+V_\vartheta d\theta_t+\frac{1}{2}V_{\vartheta\vartheta}(d\theta_t)^2\\
&=(\theta_t-u_t) V_x dt+\sigma V_x \,dW_t+\frac{\sigma^2}{2}V_{xx} \,dt+(\theta_t-\theta_1)(\theta_2-\theta_t) V_{x\vartheta} \,dt\\
&+ \frac{(\theta_t-\theta_1)(\theta_2-\theta_t)}{\sigma} V_\vartheta \,dW_t+\frac{1}{2 \sigma^2}(\theta_t-\theta_1)^2(\theta_2-\theta_t)^2 V_{\vartheta\vartheta} \,dt\\
&=: \cL V(X_t,\theta_t) \,dt - u_t V_x dt+\left(\sigma V_x + \frac{(\theta_t-\theta_1)(\theta_2-\theta_t)}{\sigma} V_\vartheta \right) \,dW_t\,.
\end{align*}

We now prove a version of the dynamic programming principle, or Bellman principle.
\begin{proposition}[Bellman principle]
For every bounded stopping time $\eta$ we have
\[
V(x,\vartheta)
=\sup_{u \in A} \E_{x,\vartheta}\left(\int_0^{\tau \wedge \eta}e^{-\delta t}u_t\,dt+e^{-\delta(\tau\wedge\eta)}V(X_{\tau\wedge\eta},\theta_{\tau\wedge\eta})\right)\,.
\]
\end{proposition}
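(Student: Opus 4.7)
The plan is to establish the two inequalities ``$\leq$'' and ``$\geq$'' separately, both leveraging a Markov/flow property of the controlled state process $(X_t,\theta_t)$ defined by \eqref{eq:dynX2}--\eqref{eq:dyntheta}. Since the SDE system is autonomous with Lipschitz coefficients, for any $u \in A$ the solution exists, is unique, and admits the usual shift and concatenation properties inherited from the driving Brownian motion $W$.

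For the ``$\leq$'' direction, fix an arbitrary $u \in A$ and split the reward at $\tau \wedge \eta$,
\[
J^{(u)}(x,\vartheta) = \E_{x,\vartheta}\left(\int_0^{\tau\wedge\eta} e^{-\delta t} u_t\, dt\right) + \E_{x,\vartheta}\left(\int_{\tau\wedge\eta}^{\tau} e^{-\delta t} u_t\, dt\right).
\]
Conditioning the second term on $\cF^{X,\theta}_{\tau\wedge\eta}$ and invoking the strong Markov property of $W$, it rewrites as $\E_{x,\vartheta}(e^{-\delta(\tau\wedge\eta)} J^{(\tilde u)}(X_{\tau\wedge\eta},\theta_{\tau\wedge\eta}))$, where $\tilde u_s := u_{s+\tau\wedge\eta}$ is itself admissible. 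On the event $\{\tau \leq \eta\}$ the residual integral vanishes, consistently with the convention $V(0,\cdot)=0$ forced by the absorbing boundary at zero. Bounding $J^{(\tilde u)} \leq V$ pointwise and then taking the supremum over $u$ yields the first inequality.

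For the ``$\geq$'' direction, fix $\varepsilon>0$. For each state $(y,\zeta)$ pick by definition of the supremum an admissible $u^{\varepsilon,y,\zeta}$ with $J^{(u^{\varepsilon,y,\zeta})}(y,\zeta) \geq V(y,\zeta) - \varepsilon$. Using the continuity of $V$ recalled in the remark above, I partition the state space into countably many Borel sets $B_i$ on which $V$ oscillates by less than $\varepsilon$ and on which a single representative $u^{\varepsilon,i}$ satisfies $J^{(u^{\varepsilon,i})}(y,\zeta) \geq V(y,\zeta) - 2\varepsilon$ for all $(y,\zeta)\in B_i$. Given any $u\in A$, define the concatenation $\hat u$ equal to $u$ on $[0,\tau\wedge\eta]$ and equal to the time-shifted $u^{\varepsilon,i}$ thereafter, where $i$ is the $\cF^{X,\theta}_{\tau\wedge\eta}$-measurable index with $(X_{\tau\wedge\eta},\theta_{\tau\wedge\eta})\in B_i$; progressive measurability of $\hat u$ follows from the construction. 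One further application of the shift property gives
\[
V(x,\vartheta) \geq J^{(\hat u)}(x,\vartheta) \geq \E_{x,\vartheta}\left(\int_0^{\tau\wedge\eta} e^{-\delta t} u_t\, dt + e^{-\delta(\tau\wedge\eta)}(V(X_{\tau\wedge\eta},\theta_{\tau\wedge\eta}) - 2\varepsilon)\right).
\]
Sending $\varepsilon\to 0$ and supping over $u$ delivers the reverse inequality.

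The principal technical obstacle is precisely this measurable-selection step: assembling a countable family of locally $\varepsilon$-optimal controls into a single progressively measurable process while preserving $\varepsilon$-optimality uniformly. Continuity of $V$ together with separability of the state space is what keeps the argument elementary and avoids invoking heavier measurable selection theorems; verifying that $\hat u$ remains $\{\cF^{X,\theta}_t\}$-progressively measurable and that the ruin time of the concatenated trajectory agrees with the concatenation of ruin times on the two pieces is where the bookkeeping has to be carried out carefully.
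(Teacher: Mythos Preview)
Your two-inequality scheme and the concatenation-of-$\varepsilon$-optimal-controls construction match the paper's proof exactly. The only place where the paper is more concrete is the step you correctly flag as the obstacle: continuity of $V$ alone does \emph{not} give you $J^{(u^{\varepsilon,i})}(y,\zeta)\ge V(y,\zeta)-2\varepsilon$ throughout $B_i$, since that inequality also requires control over $J^{(u^{\varepsilon,i})}$ as a function of the initial state. The paper handles this by (i) truncating to a compact box $[0,B]\times[\theta_1,\theta_2]$ using the uniform limit $V\to K/\delta$ from Lemma~\ref{th:x-infinity} (so the strategy $u\equiv K$ is near-optimal outside), (ii) covering the box with a \emph{finite} grid of rectangles, and (iii) exploiting that both $V$ \emph{and} $J^{(u)}$ are monotone in each coordinate (pathwise comparison), which yields $V(x,\vartheta)-J^{(\tilde u_{ij})}(x,\vartheta)\le V(x_{i+1},\vartheta_{j+1})-J^{(\tilde u_{ij})}(x_i,\vartheta_j)\le \tfrac{2\varepsilon}{3}$ on each rectangle. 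Monotonicity of $J^{(u)}$ is the missing ingredient in your sketch; once you add it (or alternatively argue continuity of $J^{(u)}$ in the initial data), your countable-partition version goes through as well.
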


\begin{proof}
Let $u^*$ be some $\varepsilon$-optimal strategy for $(x,\vartheta)$, then 
\begin{align*}
V(x,\vartheta)-\varepsilon&<J^{(u^*)}(x,\vartheta) = \E_{x,\vartheta}\left(\int_0^{\tau}e^{-\delta t}u^*_t \,dt\right)\\
&\le \E_{x,\vartheta}\left(\int_0^{\tau \wedge \eta}e^{-\delta t}u^*_t \,dt + e^{-\delta (\tau \wedge \eta)}V(X_{\tau \wedge \eta}, \theta_{\tau \wedge \eta})\right) \nonumber \\
&\le \sup_{u\in A}\E_{x,\vartheta}\left(\int_0^{\tau \wedge \eta}e^{-\delta t}u_t \,dt + e^{-\delta (\tau \wedge \eta)}V(X_{\tau \wedge \eta}, \theta_{\tau \wedge \eta})\right)\,, 
\end{align*}
which proves that $V(x,\vartheta)\le\sup_{u\in A}\E_{x,\vartheta}\left(\int_0^{\tau \wedge \eta}e^{-\delta t}u_t \,dt + e^{-\delta (\tau \wedge \eta)}V(X_{\tau \wedge \eta}, \theta_{\tau \wedge \eta})\right)$, since $\varepsilon>0$ was arbitrary.

On the other hand, for any $\varepsilon>0$  we have
\begin{align*}
\lefteqn{\sup_{u\in A}\E_{x,\vartheta}\left(\int_0^{\tau \wedge \eta}e^{-\delta t}u_t \,dt + e^{-\delta (\tau \wedge \eta)}V(X_{\tau \wedge \eta}, \theta_{\tau \wedge \eta})\right)}\\
&\le \E_{x,\vartheta}\left(\int_0^{\tau \wedge \eta}e^{-\delta t}u_t^{1,\varepsilon} \,dt
+ e^{-\delta (\tau \wedge \eta)} J^{(u^{2,\varepsilon})}(X_{\tau \wedge \eta}, \theta_{\tau \wedge \eta})\right) + \varepsilon\,,
 \end{align*}
where $u^{1,\varepsilon}$ and $u^{2,\varepsilon}$ are $\frac{\varepsilon}{3}$-optimal
and $\frac{2\varepsilon}{3}$-optimal strategies for $(x,\vartheta)$ and
$(X_{\tau \wedge \eta}, \theta_{\tau \wedge \eta})$, respectively.  From the continuity of
$V$ one can construct these strategies by a similar procedure as stated in
\citet{azcue2005}.\\
The concrete procedure for constructing $u^{2,\varepsilon}$ is as follows. 
Fix $\varepsilon>0$, 
and determine $B \in (0,\infty)$ such that 
$\frac{K}{\delta}-V(x,\vartheta)<\frac{2\varepsilon}{3}$ for all 
$x\ge B$. Note that for an initial value $(x,\vartheta)$ with $x\ge B$ the strategy
$\tilde u\equiv K$ is $\frac{2\varepsilon}{3}$-optimal.

Choose grid points $x_1<\ldots<x_N$ and $\vartheta_1< \ldots < \vartheta_N$
such that the rectangles $[x_i,x_{i+1}]\times[\vartheta_j,\vartheta_{j+1}]$
cover $[0,B]\times[\theta_1,\theta_2]$. 
Since $V$ is increasing with each parameter, we have that
\begin{align*}
V(x_i,\vartheta_j)\le V(x,\vartheta)\le V(x_{i+1},\vartheta_{j+1})\,,
\end{align*}
whenever $(x,\vartheta)\in[x_i,x_{i+1}]\times[\vartheta_j,\vartheta_{j+1}]$.
Now, because of continuity of $V$, the number $N$ and the grid points can be 
chosen such that 
\begin{align*}
 0\leq V(x_{i+1},\vartheta_{j+1})-V(x_{i},\vartheta_{j})\leq\frac{\varepsilon}{3}\,,
\end{align*}
for all $i,j$. 
Let $\tilde{u}_{ij}$ be an 
$\frac{\varepsilon}{3}$-optimal strategy for $(x_i,\vartheta_j)$,
\begin{align*}
 V(x_i,\vartheta_j)\leq J^{(\tilde{u}_{ij})}(x_i,\vartheta_j)+\frac{\varepsilon}{3}\,.
\end{align*}
Note that also $
J^{(\tilde{u}_{ij})}(x_i,\vartheta_j)\le J^{(\tilde{u}_{ij})}(x,\vartheta)\le J^{(\tilde{u}_{ij})}(x_{i+1},\vartheta_{j+1})$.

The strategies $\tilde u^{ij}$ together with $\tilde u \equiv K$ define the strategy $u^{2,\varepsilon}$ for $(X_{\eta\wedge\tau},\theta_{\eta\wedge\tau})$.
For $(X_{\eta\wedge\tau},\theta_{\eta\wedge\tau})\in[x_i,x_{i+1}]\times[\vartheta_j,\vartheta_{j+1}]$ we get
\begin{align*}
0&\le V(X_{\eta\wedge\tau},\theta_{\eta\wedge\tau})-J^{(\tilde{u}_{ij})}(X_{\eta\wedge\tau},\theta_{\eta\wedge\tau})
\le V(x_{i+1},\vartheta_{j+1})-J^{(\tilde{u}_{ij})}(x_i,\vartheta_j)\\
&\le V(x_{i},\vartheta_{j})-J^{(\tilde{u}_{ij})}(x_i,\vartheta_j)+\frac{\varepsilon}{3}\le \frac{2\varepsilon}{3}\,.
\end{align*}
Finally, we choose an $\frac{\varepsilon}{3}$-optimal strategy $u^{1,\varepsilon}$ for the right hand side of the dynamic programming principle, i.e.,
\[
\sup_{u\in A}\E_{x,\vartheta}\left(\int_0^{\tau\wedge \eta}e^{-\delta t} u_t dt\right)
\le\E_{x,\vartheta}\left(\int_0^{\tau\wedge \eta}e^{-\delta t} u^{1,\varepsilon}_t dt\right)+\frac{\varepsilon}{3}.
\]
Now we put everything together. For any strategy $u$
\begin{align*}
\lefteqn{\E_{x,\vartheta}\left(\int_0^{\tau\wedge \eta}e^{-\delta t} u_t dt
+e^{-\delta(\tau\wedge\eta)}V(X_{\tau\wedge\eta},\theta_{\tau\wedge\eta}) \right)}\\
&=\E_{x,\vartheta}\left(\int_0^{\tau\wedge \eta}e^{-\delta t} u_t dt\right)
+\E_{x,\vartheta}\left(e^{-\delta(\tau\wedge\eta)}V(X_{\tau\wedge\eta},\theta_{\tau\wedge\eta}); X_{\tau\wedge\eta}\ge B \right)\\
&\quad+\sum_{i,j}\E_{x,\vartheta}\left(e^{-\delta(\tau\wedge\eta)}V(X_{\tau\wedge\eta},\theta_{\tau\wedge\eta}); (X_{\tau\wedge\eta},\theta_{\tau\wedge\eta})\in [x_{i},x_{i+1})\times [\vartheta_j,\vartheta_{j+1}] \right)\\ 
&\le\E_{x,\vartheta}\left(\int_0^{\tau\wedge \eta}e^{-\delta t} u^{1,\varepsilon}_t dt\right)
+\frac{\varepsilon}{3}
+\E_{x,\vartheta}\left(e^{-\delta(\tau\wedge\eta)} J^{(\tilde u)}(X_{\tau\wedge\eta},\theta_{\tau\wedge\eta}); X_{\tau\wedge\eta}\ge B \right)\\
&\quad+\sum_{i,j}\E_{x,\vartheta}\left(e^{-\delta(\tau\wedge\eta)}J^{(\tilde u_{ij})}(X_{\tau\wedge\eta},\theta_{\tau\wedge\eta}); (X_{\tau\wedge\eta},\theta_{\tau\wedge\eta})\in [x_{i},x_{i+1})\times [\vartheta_j,\vartheta_{j+1}] \right)+\frac{2\varepsilon}{3}\\ 
&=J^{(u^\varepsilon)}+\varepsilon\le V(x,\vartheta)+\varepsilon\,,
\end{align*}
where \vspace{-1em}
\begin{align*}
  u_t^\varepsilon:=\begin{cases}
               u_t^{1, \varepsilon}, & t \le \tau \wedge \eta\\
               u_t^{2, \varepsilon}, & t > \tau \wedge \eta\,.
              \end{cases}
\end{align*}
Thus we have for any $u$ that
$\E_{x,\vartheta}\left(\int_0^{\tau\wedge \eta}e^{-\delta t} u_t dt
+e^{-\delta(\tau\wedge\eta)}V(X_{\tau\wedge\eta},\theta_{\tau\wedge\eta}) \right)<V(x,\vartheta)+\varepsilon$,
and therefore
\[
\sup_{u\in A}\E_{x,\vartheta}\left(\int_0^{\tau\wedge \eta}e^{-\delta t} u_t dt
+e^{-\delta(\tau\wedge\eta)}V(X_{\tau\wedge\eta},\theta_{\tau\wedge\eta}) \right)\le V(x,\vartheta)+\varepsilon\,.
\]
Since $\varepsilon > 0$ was arbitrary, the proof is finished.
\end{proof}


Now, still under the assumption $V \in C^2$, we apply the dynamic programming principle and derive
\begin{align*}
V(x,\vartheta)
&=\sup_{u \in A}\E_{x,\vartheta}\left(\int_0^{\tau}e^{-\delta t}u_t \,dt\right)
=\sup_{u \in A}\E_{x,\vartheta}\left(\int_0^{\tau \wedge \eta}e^{-\delta t}u_t\,dt+e^{-\delta (\tau \wedge \eta)}V(X_{\tau \wedge \eta},\theta_{\tau \wedge \eta})\right)\\
&=\sup_{u \in A}\E_{x,\vartheta}\left(\int_0^{\tau \wedge \eta}e^{-\delta t}u_t\,dt+e^{-\delta (\tau \wedge \eta)}\left(V(x,\vartheta)+\int_0^{\tau \wedge \eta}\!\!\!\!dV(X_t,\theta_t)\right)\right)\\
&=\sup_{u \in A}\E_{x,\vartheta}\left(\int_0^{\tau \wedge \eta}e^{-\delta t}u_t\,dt+e^{-\delta (\tau \wedge \eta)}\left(V(x,\vartheta)
+\int_0^{\tau \wedge \eta}\!\!\!\!\cL V(X_t,\theta_t)-V_x u_t \,dt\right)\right)\,.
\end{align*}
Therefore,
\begin{align*}
\frac{1-e^{-\delta (\tau \wedge \eta)}}{\tau \wedge \eta}V(x,\vartheta)=
\sup_{u \in A}\E_{x,\vartheta}\left(\frac{1}{\tau \wedge \eta}\int_0^{\tau \wedge \eta}e^{-\delta t}u_t\,dt
+\frac{1}{\tau \wedge \eta}\int_0^{\tau \wedge \eta}\!\!\!\!\cL V(X_t,\theta_t)-V_x u_t \,dt\right)\,,
\end{align*}
and by letting $\eta\rightarrow 0$ we arrive at
\[
\delta V=\sup_{u \in [0,K]} \left(u+\cL V- u V_x \right)\,.
\]
Thus, the associated Hamilton-Jacobi-Bellman equation is given by
\begin{align}
 \label{eq:HJB}(\cL - \delta) V +\sup_{u \in [0,K]}(u(1-V_x))=0\,.
\end{align}
The HJB equation is a second order degenerate-elliptic PDE since
there is only one Brownian motion driving the two-dimensional process $(X_t,\theta_t)_{t \ge 0}$.\\
Now it remains to find appropriate boundary conditions. The ones for 
$x=0$ and $x\rightarrow \infty$ follow from the definition of $V$ and from 
Lemma \ref{th:x-infinity}, respectively:
\begin{align}
\label{eq:BC1}V(0,\vartheta)&=0\,,\\
\label{eq:BC2}V(B,\vartheta)&\rightarrow\frac{K}{\delta} \mbox{ uniformly in }\vartheta\mbox{ for }B \rightarrow \infty\,.
\end{align}
The ones for $\vartheta\rightarrow\theta_i$, $i=1,2$ are obtained by solving
the optimal control problem for known deterministic drift, as has been 
done in \citet{asmussen1997}. We give their solution with
notation adapted to our setup:
\begin{align}
\label{eq:BC3}V(x,\theta_i)&=\begin{cases}
                                  a_{1,i} \exp(\alpha_{1,i}(x-\bar{b}_i))+a_{2,i} \exp (-\alpha_{2,i}(x-\bar{b}_i)), & x<\bar{b}_i\\
                                  b_{2,i} \exp (-\beta_{2,i} (x-\bar{b}_i)) + \frac{K}{\delta}, & x \ge \bar{b}_i \,,
                                 \end{cases}
\end{align}
where $i \in \lbrace1,2 \rbrace$ and 
\begin{align}
\label{eq:classical_pars1}
\alpha_{1,i}&=\frac{1}{\sigma^2} \left( - \theta_i + \sqrt{\theta_i^2 + 2 \sigma^2 \delta} \right)\,,\\
\label{eq:classical_pars2}\alpha_{2,i}&= \frac{1}{\sigma^2} \left(\theta_i + \sqrt{\theta_i^2 + 2 \sigma^2 \delta} \right)\,,\\
\label{eq:classical_pars3}\beta_{2,i}&= \frac{1}{\sigma^2} \left( \theta_i-K + \sqrt{\left(\theta_i-K\right)^2 + 2 \sigma^2 \delta} \right)\,,\\
\label{eq:classical_pars5}a_{1,i}&=\frac{\alpha_{2,i}(\frac{K}{\delta}-\frac{1}{\beta_{2,i}})+1}{\alpha_{1,i}+\alpha_{2,i}}\,,\\
\label{eq:classical_pars6}a_{2,i}&=\frac{\alpha_{1,i}(\frac{K}{\delta}-\frac{1}{\beta_{2,i}})-1}{\alpha_{1,i}+\alpha_{2,i}}\,,\\
\label{eq:classical_pars4}\bar{b}_i&=\left(\frac{1}{\alpha_{1,i}+\alpha_{2,i}}\log\left(-\frac{a_{1,i}}{a_{2,i}}\right)\right)_+\,,\\
\label{eq:classical_pars7}b_{2,i}&
=\left\{
\begin{array}{cc}
-\frac{1}{\beta_{2,i}}&\bar b_i>0\\
-\frac{K}{\delta}&\bar b_i= 0\,.
\end{array}
\right.
\end{align} 
\begin{remark}
We could give the parameters $\theta_1,\theta_2,\delta,K$
relative to $\sigma^2$, that is $\theta_i=\hat \theta_i\sigma^2$,
$\delta=\hat \delta \sigma^2$, $K=\hat K\sigma^2$. Then $\sigma^2$
cancels from all expressions in \eqref{eq:classical_pars1}--\eqref{eq:classical_pars7}.

In other words: the qualitative behavior of the model only depends 
on the relative values
between the parameters $\theta_1,\theta_2,\delta,K$.
\end{remark}
\begin{remark}
For known deterministic drift $\theta_i$ the optimal strategy is of
threshold type: no dividends are paid for $X_t<\bar b_i$, and for $X_t\ge \bar
b_i$ dividends are paid at maximum rate $K$.

A numerical solution of the HJB equation will be presented in Section
\ref{sec:Num}. We will see that the numerical results suggest that also
in our Bayesian setup a threshold strategy is optimal.
\end{remark}


\section{Viscosity Solution Characterization}
\label{sec:Verif}

In this section we  present the main theoretical results of this paper.\\
In the univariate setting, as described in \citet{asmussen1997},
it is possible to determine a smooth explicit solution of the associated HJB equation,
whereas in the present situation the HJB equation \eqref{eq:HJB} hardly allows for such a solution.
As a consequence one needs to rely on numerical methods for obtaining a solution of the optimization problem
and the crucial theoretical basis is the uniqueness of a solution of \eqref{eq:HJB}.
Since a-priori the regularity of a solution is questionable, we need a weaker solution concept, which still allows to prove uniqueness.\\

Therefore, we characterize the optimal value function $V$ as the unique viscosity solution of \eqref{eq:HJB},
since this concept also serves as a basis for numerical considerations.
  \\
Below we present the concept of viscosity solutions for the HJB equation under study.
A more detailed treatment can be found in \citet{fleming2006} or \citet{crandall1992}.
Denote 
$\Omega:=(0,\infty) \times (\theta_1, \theta_2)$, $\bar{\Omega}=[0,\infty)\times [\theta_1, \theta_2]$ and let $\partial \bar{\Omega}$
denote its boundary.
\begin{definition}
(viscosity solution)
\begin{enumerate}
 \item A function $w:\bar{\Omega} \rightarrow \R$ is a {\it viscosity subsolution} to \eqref{eq:HJB} if
 \[
  -\delta \phi(\bar{x},\bar{\vartheta}) + \cL \phi(\bar{x},\bar{\vartheta}) + \sup_{u \in [0,K]}(u(1-\phi_x(\bar{x},\bar{\vartheta})))\ge 0
 \]
 for all $(\bar{x},\bar{\vartheta}) \in \Omega$ and for all $\phi \in C^2(\bar{\Omega})$ such that $(\bar{x},\bar{\vartheta})$ is a maximum of $w-\phi$ with $w(\bar{x},\bar{\vartheta})=\phi(\bar{x},\bar{\vartheta})$.
 
 \item A function $w:\bar{\Omega} \rightarrow \R$ is a {\it viscosity supersolution} to \eqref{eq:HJB} if
 \[
  -\delta \psi(\bar{x},\bar{\vartheta}) + \cL \psi(\bar{x},\bar{\vartheta}) + \sup_{u \in [0,K]}(u(1-\psi_x(\bar{x},\bar{\vartheta})))\le 0
 \]
 for all $(\bar{x},\bar{\vartheta}) \in \Omega$ and for all $\psi \in C^2(\bar{\Omega})$ such that $(\bar{x},\bar{\vartheta})$ is a minimum of $w-\psi$ with $w(\bar{x},\bar{\vartheta})=\psi(\bar{x},\bar{\vartheta})$.
 
\item $w:\bar{\Omega} \rightarrow \R$ is a {\it viscosity solution} to \eqref{eq:HJB}  if it is both a viscosity sub- and supersolution.
\end{enumerate}
\end{definition}
The derivation of \eqref{eq:HJB} was done in a heuristic way.
The following theorem shows that the optimal value function $V$ is indeed connected to it in a weak sense.
\begin{theorem}\label{thm:viscosity}
$V(x,\vartheta)$ is a viscosity solution of \eqref{eq:HJB} with boundary conditions \eqref{eq:BC1}, \eqref{eq:BC2} and \eqref{eq:BC3}.
\end{theorem}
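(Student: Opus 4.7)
My plan is to handle the three boundary conditions first and then verify the viscosity sub- and supersolution properties of $V$ at each interior point of $\Omega$ by combining the Bellman principle proved above with It\^o's formula applied to the test function.

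The boundary conditions are quick. Condition \eqref{eq:BC1} is immediate since $\tau=0$ whenever $X_0=0$, so the integral defining $V$ vanishes. Condition \eqref{eq:BC2} is exactly the uniform limit in Lemma \ref{th:x-infinity}. For \eqref{eq:BC3} I would observe that the SDE \eqref{eq:dyntheta} has drift $0$ and diffusion coefficient $(\theta_t-\theta_1)(\theta_2-\theta_t)/\sigma$, which vanishes at $\vartheta=\theta_i$; pathwise uniqueness then forces $\theta_t\equiv\theta_i$ whenever $\theta_0=\theta_i$, and \eqref{eq:dynX2} collapses to a Brownian motion with constant drift $\theta_i$, so the problem reduces to the known-drift setting of \citet{asmussen1997}.

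For the supersolution property I would fix $(\bar x,\bar\vartheta)\in\Omega$ and $\psi\in C^2(\bar\Omega)$ with $V\ge\psi$ on a closed neighborhood $B\subset\Omega$ and $V(\bar x,\bar\vartheta)=\psi(\bar x,\bar\vartheta)$. For $\bar u\in[0,K]$, apply Bellman to the constant strategy $u_t\equiv \bar u$ stopped at $\eta_h:=h\wedge\tau_B$, where $\tau_B$ is the first exit of $(X_t,\theta_t)$ from $B$, and use $V\ge\psi$ on $B$ to obtain
\begin{align*}
\psi(\bar x,\bar\vartheta) \ge \E_{\bar x,\bar\vartheta}\left[\int_0^{\eta_h}e^{-\delta t}\bar u\,dt + e^{-\delta\eta_h}\psi(X_{\eta_h},\theta_{\eta_h})\right]\,.
\end{align*}
Expanding $e^{-\delta t}\psi(X_t,\theta_t)$ via It\^o's formula (the stochastic integral is a true martingale because its integrand is bounded on $B$) and rearranging yields
\begin{align*}
0 \ge \E_{\bar x,\bar\vartheta}\left[\int_0^{\eta_h}e^{-\delta t}\bigl(\cL\psi -\delta\psi + \bar u(1-\psi_x)\bigr)\,dt\right]\,.
\end{align*}
Dividing by $h$ and sending $h\to0$ (using $\eta_h/h\to 1$ a.s.\ and dominated convergence) gives $\cL\psi(\bar x,\bar\vartheta)-\delta\psi(\bar x,\bar\vartheta)+\bar u(1-\psi_x(\bar x,\bar\vartheta))\le 0$; taking $\sup$ over $\bar u\in[0,K]$ finishes this part. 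For the subsolution I would argue by contradiction: if $\cL\phi-\delta\phi+\sup_u u(1-\phi_x)\le -\kappa<0$ at a maximum point $(\bar x,\bar\vartheta)$ of $V-\phi$, continuity propagates the bound with $\kappa/2$ to a neighborhood $B$; substituting an $\varepsilon$-optimal strategy $u^\varepsilon$ from Bellman together with $V\le\phi$ on $B$ into the analogous It\^o computation gives $-\varepsilon\le -\tfrac{\kappa}{2}\,\E_{\bar x,\bar\vartheta}[\int_0^{\eta_h}e^{-\delta t}dt]$, contradicted by choosing $\varepsilon$ small compared to $h$.

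The main obstacle is the technical bookkeeping around the localization: verifying that the It\^o stochastic integrals have zero expectation (handled by stopping at $\tau_B$, which keeps the state in a compact set on which the coefficients and the derivatives of the test function are bounded), that the $\varepsilon$-optimal strategies produced by the Bellman principle remain admissible and usable up to $\eta_h$, and that the limit $h\to0$ is taken uniformly enough for the contradiction in the subsolution step to survive. The boundary $\vartheta=\theta_i$ is also a degenerate boundary for \eqref{eq:dyntheta} and so deserves a separate sanity check, but the collapse of the filter to a constant there reduces it to the classical case and involves no fundamentally new ideas.
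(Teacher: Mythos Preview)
Your proposal is correct and, for the supersolution half, essentially identical to the paper's argument: fix a constant control, apply the Bellman principle together with It\^o's formula for the test function, divide by the time horizon and let it tend to zero.

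For the subsolution half you take a genuinely different route. The paper argues \emph{directly}: it couples the approximation parameter to the time horizon by requiring $\varepsilon<\eta^2$, picks an $\varepsilon$-optimal strategy $u^\varepsilon$, applies It\^o to $\phi$, divides by $\eta$ and sends $\eta\to 0$; since $\varepsilon/\eta\to 0$ the error term disappears and one is left with $(\cL-\delta)\phi+u^\varepsilon(1-\phi_x)\ge 0$, which is then bounded above by the supremum over $u$. Your approach is the standard \emph{contradiction} argument: assume the subsolution inequality fails by a margin $\kappa$, propagate this to a neighbourhood by continuity, and show that the Bellman principle then forces $-\varepsilon\le -\tfrac{\kappa}{2}\,\E_{\bar x,\bar\vartheta}\bigl[\int_0^{\eta_h}e^{-\delta t}\,dt\bigr]$, a contradiction once $h$ is fixed and $\varepsilon$ is sent to zero. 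Both are valid; the paper's coupling trick is shorter on the page, while your version is closer to the textbook treatments (e.g.\ Pham, Fleming--Soner) and makes the order of limits more transparent. Your explicit localization via $\tau_B$ also makes the martingale property of the It\^o integral honest, which the paper glosses over. The only point to watch in your argument is that the lower bound on $\E_{\bar x,\bar\vartheta}\bigl[\int_0^{\eta_h}e^{-\delta t}\,dt\bigr]$ must be uniform over the choice of $\varepsilon$-optimal strategy; this holds because the drift is bounded by $|\theta_2|+K$ and the diffusion coefficients are bounded on $B$, so the exit time $\tau_B$ is stochastically bounded below independently of $u$.
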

The uniqueness result is based on comparison, which is dealt with in the next theorem.\\
Remember that the optimal value function $V$ exhibits the following properties:
it is continuous on $\bar{\Omega}$, 
bounded, i.e., $0\le V\leq
\frac{K}{\delta}$, and
$\lim_{x\to\infty}V(x,\vartheta)=\frac{K}{\delta}$ uniformly in $\vartheta$.

\begin{theorem}[Comparison]\label{thm:comparison}
Let $w$ and $v$ be a bounded and continuous viscosity subsolution and supersolution of \eqref{eq:HJB},
respectively.\\ 
If $w\leq v$ on
$\partial\bar{\Omega}$ and 
$\limsup_{x\rightarrow\infty}(w-v)(x,\vartheta)\le 0$ uniformly in $\vartheta$, then $w\leq v$ on $\Omega$.
\end{theorem}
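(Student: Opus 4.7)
The plan is to apply Ishii's doubling-of-variables technique. Suppose for contradiction that $M:=\sup_{\bar\Omega}(w-v)>0$. A preliminary step is to reduce to an interior maximizer: since $w\le v$ on $\partial\bar\Omega$ and $\limsup_{x\to\infty}(w-v)\le 0$ uniformly in $\vartheta$, a linear penalization in $x$ shows that for all sufficiently small $\alpha>0$, the function $(w-v)(x,\vartheta)-2\alpha x$ still has strictly positive supremum, attained at some $(x_\alpha,\vartheta_\alpha)$ in the interior of $\bar\Omega$, with $(w-v)(x_\alpha,\vartheta_\alpha)\to M$ as $\alpha\to 0$.

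I would then consider, for small $\varepsilon>0$,
\begin{equation*}
\Phi_{\varepsilon,\alpha}(x,\vartheta,y,\eta):=w(x,\vartheta)-v(y,\eta)-\frac{(x-y)^2+(\vartheta-\eta)^2}{2\varepsilon}-\alpha(x+y),
\end{equation*}
which, by boundedness of $w,v$ and the linear penalty, attains its supremum at some $(x_\varepsilon,\vartheta^1_\varepsilon,y_\varepsilon,\vartheta^2_\varepsilon)\in\bar\Omega\times\bar\Omega$. Standard penalization arguments yield, along a subsequence as $\varepsilon\to 0$, convergence of both pairs to $(x_\alpha,\vartheta_\alpha)$ together with $(x_\varepsilon-y_\varepsilon)^2/\varepsilon\to 0$ and $(\vartheta^1_\varepsilon-\vartheta^2_\varepsilon)^2/\varepsilon\to 0$; for $\alpha$ and $\varepsilon$ small enough, all four coordinates lie in $\Omega$, so the viscosity inequalities may be invoked.

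The Crandall-Ishii lemma then furnishes symmetric $2\times 2$ matrices $A,B$ satisfying
\begin{equation*}
\begin{pmatrix}A & 0\\ 0 & -B\end{pmatrix}\le\frac{3}{\varepsilon}\begin{pmatrix}I & -I\\ -I & I\end{pmatrix},
\end{equation*}
with gradient vectors $p^1,p^2$ differing only in the $x$-component by $p^1-p^2=(2\alpha,0)$. Inserting these into the sub- and supersolution inequalities for $w$ and $v$ and subtracting yields
\begin{equation*}
\delta\bigl(w(x_\varepsilon,\vartheta^1_\varepsilon)-v(y_\varepsilon,\vartheta^2_\varepsilon)\bigr)\le\Delta_{\mathrm{diff}}+\Delta_{\mathrm{drift}}+\Delta_{\mathrm{Ham}}.
\end{equation*}
The Hamiltonian piece is harmless: $h(p):=\sup_{u\in[0,K]}u(1-p)=K(1-p)^+$ is $K$-Lipschitz, so $\Delta_{\mathrm{Ham}}\le 2K\alpha$. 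The drift piece decomposes as $\vartheta^1_\varepsilon p^1_x-\vartheta^2_\varepsilon p^2_x=2\alpha\vartheta^1_\varepsilon+(\vartheta^1_\varepsilon-\vartheta^2_\varepsilon)p^2_x$, and Cauchy-Schwarz applied to the two penalty quotients shows the cross term vanishes, leaving $\Delta_{\mathrm{drift}}\to 2\alpha\vartheta_\alpha$ as $\varepsilon\to 0$.

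The main obstacle is the diffusion difference $\Delta_{\mathrm{diff}}$, complicated by the rank-one degeneracy of the second-order operator. The key observation is that $\cL$ can be written as $\frac{1}{2}\trace\bigl(a(\vartheta)a(\vartheta)^{\top}D^2 w\bigr)+\vartheta w_x$ with $a(\vartheta)=(\sigma,g(\vartheta)/\sigma)^{\top}$ and $g(\vartheta):=(\vartheta-\theta_1)(\theta_2-\vartheta)$ Lipschitz on the compact interval $[\theta_1,\theta_2]$. Testing the Crandall-Ishii matrix inequality against the vector $\bigl(a(\vartheta^1_\varepsilon),a(\vartheta^2_\varepsilon)\bigr)\in\R^4$ then gives
\begin{equation*}
\Delta_{\mathrm{diff}}\le\frac{3}{2\varepsilon}\bigl|a(\vartheta^1_\varepsilon)-a(\vartheta^2_\varepsilon)\bigr|^2\le C\,\frac{(\vartheta^1_\varepsilon-\vartheta^2_\varepsilon)^2}{\varepsilon}\longrightarrow 0
\end{equation*}
as $\varepsilon\to 0$, where $C$ depends only on $\sigma$ and the Lipschitz constant of $g$. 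Passing first $\varepsilon\to 0$ and then $\alpha\to 0$ yields $\delta M\le 0$, contradicting $M>0$.
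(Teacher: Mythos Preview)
Your argument is correct and follows the same Ishii doubling-of-variables strategy as the paper, but with one unnecessary complication. You introduce a linear penalty $-\alpha(x+y)$ to force compactness and then run a two-parameter limit $\varepsilon\to 0$, $\alpha\to 0$; this in turn makes the $x$-gradients differ by $2\alpha$, so the Hamiltonian terms do not cancel and you must invoke the $K$-Lipschitz continuity of $p\mapsto K(1-p)^+$. The paper observes instead that, under the stated hypotheses ($w\le v$ on $\partial\bar\Omega$ and $\limsup_{x\to\infty}(w-v)\le 0$ uniformly), the supremum $M$ of $w-v$ is already attained at an \emph{interior} point of some compact rectangle $[0,B]\times[\theta_1,\theta_2]$; no penalty is needed. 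One can then double variables with a single parameter, the $x$-components of the two test gradients coincide exactly, and the Hamiltonian contributions cancel identically. Your version is more robust (it would still work if the maximum were not attained), while the paper's is shorter and avoids the extra Lipschitz estimate; both handle the degenerate diffusion term in the same way, via the rank-one factorization $a(\vartheta)a(\vartheta)^{\top}$ and the Lipschitz continuity of $\vartheta\mapsto a(\vartheta)$ on $[\theta_1,\theta_2]$.
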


As a corollary we get uniqueness of the viscosity solution of \eqref{eq:HJB}.
\begin{cor}
The optimal value function $V$ is the unique bounded viscosity solution of \eqref{eq:HJB} with the given boundary conditions.
\end{cor}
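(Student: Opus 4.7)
The proof is essentially a formal consequence of the two preceding theorems. The plan is to take any other bounded (continuous) viscosity solution $\tilde V$ of \eqref{eq:HJB} satisfying the boundary conditions \eqref{eq:BC1}, \eqref{eq:BC2}, \eqref{eq:BC3}, and to apply the comparison theorem \emph{twice}, once with $(w,v)=(V,\tilde V)$ and once with $(w,v)=(\tilde V,V)$. By Theorem \ref{thm:viscosity}, $V$ is itself a viscosity solution — hence simultaneously a viscosity subsolution and supersolution — and by Lemma \ref{th:x-infinity} it is bounded by $K/\delta$ and continuous. The same properties are assumed for $\tilde V$.

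Next I would verify the hypotheses of Theorem \ref{thm:comparison} on $\partial\bar\Omega$ and at infinity. The boundary of $\bar\Omega$ decomposes into the piece $\{x=0\}\times[\theta_1,\theta_2]$ and the two pieces $[0,\infty)\times\{\theta_i\}$ for $i=1,2$. On the first, both $V$ and $\tilde V$ vanish by \eqref{eq:BC1}; on the remaining two, both are given by the explicit one-dimensional expression in \eqref{eq:BC3}. Hence $V=\tilde V$ on $\partial\bar\Omega$, so certainly $V\le\tilde V$ and $\tilde V\le V$ there. Similarly, \eqref{eq:BC2} gives $V(x,\vartheta)\to K/\delta$ and $\tilde V(x,\vartheta)\to K/\delta$ uniformly in $\vartheta$ as $x\to\infty$, so
\[
\limsup_{x\to\infty}\bigl(V(x,\vartheta)-\tilde V(x,\vartheta)\bigr)=0\quad\text{and}\quad\limsup_{x\to\infty}\bigl(\tilde V(x,\vartheta)-V(x,\vartheta)\bigr)=0\,,
\]
uniformly in $\vartheta$. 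The hypotheses of Theorem \ref{thm:comparison} are therefore satisfied in both orientations.

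Applying Theorem \ref{thm:comparison} with $w=V$, $v=\tilde V$ yields $V\le\tilde V$ on $\Omega$, and applying it with $w=\tilde V$, $v=V$ yields $\tilde V\le V$ on $\Omega$. Combining, $V=\tilde V$ on $\Omega$, and since equality on $\partial\bar\Omega$ was already established, $V=\tilde V$ on all of $\bar\Omega$. This proves that $V$ is the unique bounded (continuous) viscosity solution of \eqref{eq:HJB} with the prescribed boundary conditions. Since all hard work is absorbed into Theorems \ref{thm:viscosity} and \ref{thm:comparison}, there is no real obstacle at this stage; the only subtlety is to make sure that the comparison theorem is being invoked with assumptions that are manifestly met, in particular the \emph{uniform} behavior at infinity, which is exactly what \eqref{eq:BC2} provides.
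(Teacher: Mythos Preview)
Your argument is correct and follows exactly the same approach as the paper's proof: take another bounded viscosity solution and apply the comparison theorem in both orientations to conclude equality. The paper's version is more terse, while you have spelled out the verification of the boundary and growth hypotheses, but the underlying idea is identical.
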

\begin{proof}
Suppose there is another solution $W$. Then since both $W$ and $V$ are subsolutions and supersolutions fulfilling the same boundary
conditions we get that $W\leq V$ and $V\leq W$.
\end{proof}

Finally we give a verification theorem. 

\begin{theorem}\label{thm:verification}
Let $v(x,\vartheta)$ be a viscosity supersolution of \eqref{eq:HJB} with
boundary conditions \eqref{eq:BC1}, \eqref{eq:BC2}, \eqref{eq:BC3}, and let $v \in C^2$ almost everywhere. Then $V \le v$.
\end{theorem}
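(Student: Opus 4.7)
The plan is to run the standard verification argument: apply Itô's formula to $e^{-\delta t}v(X_t,\theta_t)$, exploit the supersolution inequality pointwise along the trajectory, take expectations to obtain a supermartingale-type bound, then send $T\to\infty$ and take the supremum over admissible controls.

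Concretely, fix an admissible $u\in A$ and the associated state process $(X_t,\theta_t)$ started at $(x,\vartheta)$. For $T>0$ I would compute
\begin{align*}
e^{-\delta(\tau\wedge T)}v(X_{\tau\wedge T},\theta_{\tau\wedge T})-v(x,\vartheta)
&=\int_0^{\tau\wedge T}e^{-\delta t}\bigl(-\delta v+\cL v-u_tv_x\bigr)(X_t,\theta_t)\,dt\\
&\quad+\int_0^{\tau\wedge T}e^{-\delta t}\Bigl(\sigma v_x+\tfrac{(\theta_t-\theta_1)(\theta_2-\theta_t)}{\sigma}v_\vartheta\Bigr)(X_t,\theta_t)\,dW_t.
\end{align*}
Because $v$ is only $C^2$ almost everywhere, Itô's formula in the classical sense does not apply directly; the first hurdle is to justify this identity. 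I would mollify $v$ on $\Omega$ by a standard family $v_\varepsilon=v\ast\rho_\varepsilon\in C^\infty$, apply Itô to $v_\varepsilon$, and pass to the limit $\varepsilon\downarrow 0$ using that $v_\varepsilon\to v$ together with $\partial^\alpha v_\varepsilon\to \partial^\alpha v$ in $L^1_{\mathrm{loc}}$ for $|\alpha|\le 2$ on the set where $v$ is $C^2$, and using boundedness of $v$ plus dominated convergence for the remaining terms. (Alternatively one may invoke the generalized Itô formula of Krylov for functions with second derivatives in $L^p_{\mathrm{loc}}$.)

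Next I would use the supersolution property. Wherever $v$ is $C^2$, testing against $\phi=v$ itself yields the classical pointwise inequality
\begin{equation*}
-\delta v(x,\vartheta)+\cL v(x,\vartheta)+\sup_{u\in[0,K]}\bigl(u(1-v_x(x,\vartheta))\bigr)\le 0,
\end{equation*}
so in particular for the specific control value $u_t\in[0,K]$ one has $-\delta v+\cL v-u_tv_x\le -u_t$ at $(X_t,\theta_t)$ for Lebesgue-a.e. $t$ along the trajectory. Plugging this into the Itô identity and taking expectations (the stochastic integral is a true martingale since $v$ is bounded and the integrand factors are bounded on $[\theta_1,\theta_2]$) yields
\begin{equation*}
v(x,\vartheta)\ge \E_{x,\vartheta}\!\left(\int_0^{\tau\wedge T}e^{-\delta t}u_t\,dt+e^{-\delta(\tau\wedge T)}v(X_{\tau\wedge T},\theta_{\tau\wedge T})\right).
\end{equation*}

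Finally I would let $T\to\infty$. On $\{\tau<\infty\}$, continuity of $v$ and the boundary condition \eqref{eq:BC1} give $v(X_{\tau\wedge T},\theta_{\tau\wedge T})\to v(0,\theta_\tau)=0$; on $\{\tau=\infty\}$, $e^{-\delta(\tau\wedge T)}v\to 0$ since $v$ is bounded. Dominated convergence then delivers $v(x,\vartheta)\ge \E_{x,\vartheta}\bigl(\int_0^{\tau}e^{-\delta t}u_t\,dt\bigr)=J^{(u)}(x,\vartheta)$, and taking the supremum over $u\in A$ yields $v\ge V$. The main obstacle is the first step, namely the rigorous justification of Itô's formula for a function that is $C^2$ only almost everywhere; everything after that is a routine supermartingale-and-limit argument.
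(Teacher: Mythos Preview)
Your proposal is correct and follows essentially the same route as the paper: mollify $v$ to handle the lack of global $C^2$ regularity, apply It\^o's formula to the smoothed version on $[0,\tau\wedge T]$, use the supersolution inequality to bound the drift term by $-u_t$, take expectations, then pass to the limit in the mollification parameter and in $T$ before taking the supremum over $u$. The only cosmetic difference is the order of limits---the paper keeps the mollified $\varphi^n$ and uses an $\varepsilon$-approximate supersolution inequality for it, then sends $n\to\infty$ and $\varepsilon\to 0$, whereas you aim to recover It\^o's formula for $v$ itself first---but the substance is the same.
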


\begin{remark}
Suppose one can construct a strategy $\tilde{u}$ such that $J^{(\tilde{u})}$ is a supersolution with $J^{(\tilde{u})} \in C^2$ almost everywhere.
Then Theorem \ref{thm:verification} implies that $J^{(\tilde{u})}=V$ and thus $\tilde{u}=u^*$ is the optimal strategy.
\end{remark}


\section{Policy iteration and numerical examples}
\label{sec:Num}

In this section we describe a numerical scheme to compute an approximation 
to the optimal dividend policy and optimal value function. 

We have already noted in Lemma \ref{th:x-infinity} that, 
as $x$ becomes large, the optimal value function approaches 
$\frac{K}{\delta}$ uniformly in $\vartheta$. 
For our algorithm we choose a large number $B$ and approximate the 
domain of the value function by $[0,B]\times [\theta_1,\theta_2]$. (In the
numerical examples $B$ is chosen such that $e^{-\lambda B} \le 0.01$, where
$\lambda$ is defined by Equation \eqref{eq:lambda} .) We impose the
boundary conditions \eqref{eq:BC1}, \eqref{eq:BC3},
and 
\begin{equation}
V(B,\vartheta)=\frac{\theta_2-\vartheta}{\theta_2-\theta_1}
b_{1,1} \exp (\beta_1 (B-\bar{b}_1)) 
+\frac{\vartheta-\theta_1}{\theta_2-\theta_1}b_{1,2} \exp (\beta_2 (B-\bar{b}_2)) + \frac{K}{\delta}\,,
\end{equation}
where $b_{1,i},\bar{b}_i,\beta_i$ are as defined by equations
\eqref{eq:classical_pars3}, \eqref{eq:classical_pars4}, \eqref{eq:classical_pars7}, and we may assume that $B\ge \max(\bar b_1,\bar b_2)$.
We thus have continuous boundary conditions and the ones for $x=B$ differ from
$V$ by less than $\frac{K}{\delta}(1-e^{-\lambda B})$.

Next we define a grid $G:=\{x_0,\ldots,x_n\}\times \{\vartheta_0,\ldots,\vartheta_m\}$ in 
$[0,B]\times [\theta_1,\theta_2]$, 
$0=x_0< \ldots <x_n=B$, $\theta_1=\vartheta_0<\ldots<\vartheta_m=\theta_2$.
We want to be able to put more grid points into regions which are of higher 
interest, that is,
close to the values $\theta_1,\theta_2$ and, in the $x$-direction, between 
$\bar b_1$ and $\bar b_2$. More concretely, we choose bijective  
$C^1$ functions $h_1,h_2:[0,1]\longrightarrow[0,1]$ and we set $x_k=h_1(j/n)$, $k=0,\ldots,n$
and $\vartheta_k = h_2(k/m)$, $k=0,\ldots,m$.

We use the following definitions, with $0<a_1<a_2<1$, $0\le a_3\le 1$:
\[
h_1(z)=\begin{cases}
\frac{c_2-c_1}{a_2-a_1}z+\frac{a_1^2- (a_1-z)^2}{a_1^2}\left(c_1-\frac{c_2-c_1}{a_2-a_1}a_1 \right)\,,
&0\le z<a_1\\
c_1+\frac{c_2-c_1}{a_2-a_1}(z-a_1)\,,
&a_1\le z<a_2\\
c_2+\frac{c_2-c_1}{a_2-a_1}(z-a_2)+\left(B-c_2-\frac{c_2-c_1}{a_2-a_1}(1-a_2)\right)\frac{(z-a_2)^2}{(1-a_2)^2}\,,
&a_2\le z \le 1\,,
\end{cases}
\]
where $c_1:=\min(\bar b_1,\bar b_2)$, $c_2:=\max(\bar b_1,\bar b_2)$,
and 
\[
h_2(y)=
\theta_1+(\theta_2-\theta_1)((1-a_3)y+a_3(3 y^2-2 y^3))\,.
\]
Note that $h_1,h_2$ are continuously differentiable with $h_j'>0$ in $(0,1)$ and
that we get an evenly spaced grid for $a_1=\bar b_2$, $a_2=\bar b_1$, $a_3=0$.

We start with a simple (threshold) strategy: let 
$b(\vartheta):=\frac{\theta_2-\vartheta}{\theta_2-\theta_1}
\bar{b}_1 
+\frac{\vartheta-\theta_2}{\theta_2-\theta_1}\bar{b}_2$, 
and consider the Markov strategy 
$u^{(0)}(x,\vartheta)=K1_{\{x\ge b(\vartheta)\}}$, $(x,\vartheta)\in G$, i.e., dividends are paid at the maximum intensity $K$, if the firm value exceeds the threshold level $b(\vartheta)$, otherwise no dividends are paid.
We use policy iteration to improve the strategy.

More precisely, if a Markov strategy $u^{(k)}$ is given, we solve
\[
({\cal L}^G-\delta)V+u^{(k)}(1-\cD^G_xV)=0\,,
\]
where ${\cal L}^{G}$ is the operator ${\cal L}$ with differentiation 
operators replaced by suitable finite differences and $\cD^G_x$ is 
a finite difference approximation to differentiation with respect to $x$.
Then $u^{(k+1)}$ is the function that maximizes $u(1-\cD^G_xV)$, that is
$u^{(k+1)}(x,\vartheta)=K1_{\{\cD^G_x V(x,\vartheta)\le 1 \}}$, and the iteration
stops as soon as $u^{(k+1)}=u^{(k)}$.

The details of the method as well as proofs of convergence can be found 
in \citet[Chapter IX]{fleming2006}.

We computed the optimal strategy and the corresponding value function for
 the parameter choice $\sigma = 1$, $\theta_1 =1$, $\theta_2 =2$, 
$\delta =0.5$, and the following values of $K$ and  $B$:
\vspace{2mm}\\
\begin{tabular}{l@{$\;\vline\;$}l l l l}
$K$&$0.2$&$0.67$&$0.9$&$1.5$\\
\hline
B&$2.22$&$3.33$&$4.17$&$7.46$
\end{tabular}\\

In those examples the iteration stops after 3 steps and
the resulting strategy turns out to be a threshold strategy with the threshold
depending on the estimate of $\theta$.

Figure \ref{fig:thresholdplot} shows the threshold level (blue) 
in dependence of 
$\theta$ and, for comparison,  also the corresponding classical threshold 
level (green) from equation \eqref{eq:classical_pars4}. Interestingly,
the difference between these levels can be 
substantial, both quantitatively and qualitatively.


\begin{figure}[ht]
\begin{center}
\includegraphics[scale=0.5]{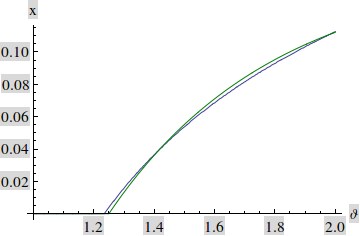}
\includegraphics[scale=0.5]{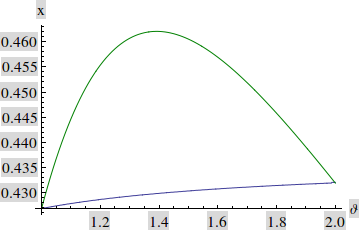}

\includegraphics[scale=0.5]{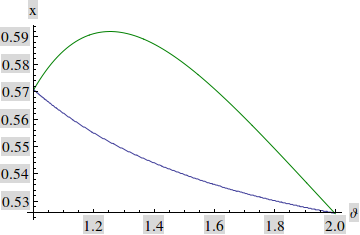}
\includegraphics[scale=0.5]{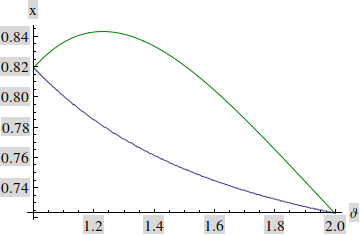}
\end{center}
\caption{The resulting threshold levels for different parameter sets.}\label{fig:thresholdplot}
\end{figure}

Figure \ref{fig:valueplot} shows the value function corresponding to $K=1.5$.
We have refrained from showing the plots of the corresponding 
value functions for the other values of $K$ 
as they all look very similar to the one in 
Figure \ref{fig:valueplot}.

\begin{figure}[ht]
\begin{center}
\includegraphics[scale=0.6]{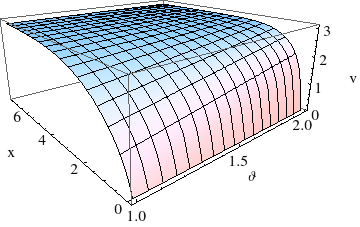}
\end{center}
\caption{The resulting value function.}\label{fig:valueplot}
\end{figure}


\begin{remark}
Figure \ref{fig:valueplot} suggests that the value function is smooth and that therefore it should even be a classical solution of the HJB equation.
However, {\em proving} smoothness is beyond the scope of this paper, since the HJB equation is degenerate elliptic on the whole domain, i.e., the diffusion coefficient is singular, which is highly non-standard.
\end{remark}

\begin{remark}
All of our examples give threshold strategies as the optimal dividend
strategy. The convergence results from \citet[Chapter IX]{fleming2006}
imply that, at least for our parameter sets, we can compute 
an $\varepsilon$-optimal value function corresponding to a threshold strategy.
\end{remark}


\section{Threshold strategies}
\label{sec:Threshold}

The solution for the case where the drift $\theta$ of the uncontrolled wealth
process is deterministic as well as the numerical treatment of the Bayesian
case suggest that the optimal dividend strategy is of threshold type,
that is, there is a threshold level $b$ such that as soon as the wealth process
exceeds the threshold level, dividends are paid at the maximum rate. 

The numerical treatment for the Bayesian case further suggests that 
the threshold level  depends on the estimate for $\theta$. In this section 
we formally define threshold strategies of this type and we give sufficient 
conditions under which they are admissible. We then proceed with giving a sufficient
condition on a threshold strategy for being optimal. 
Of course one would also like to know whether there always
exists at least one optimal
strategy of threshold type. Unfortunately, this question has to remain open
for the time being.

\begin{definition}
Let $b:[\theta_1,\theta_2]\longrightarrow [0,\infty)$ be a continuous function
and let $b(\theta_1)=\bar b_1$, $b(\theta_2)=\bar b_2$,
where $\bar b_1,\bar b_2$ are defined in \eqref{eq:classical_pars4}.

A {\em threshold strategy} with threshold level $b$ is given by
\[
u_t^b=u^b(X_t,\theta_t)=\begin{cases}
                     K, & X_t \ge b(\theta_t)\\
                     0, & X_t < b(\theta_t)\,.
                    \end{cases}
\]
\end{definition}

First, we have to clarify whether threshold strategies are admissible.
The system \eqref{eq:dynX2}, \eqref{eq:dyntheta}, with $u$ replaced by
a stationary Markov strategy, reads
\begin{align}
\label{eq:dynX-Markov}X_t&= x + \int_0^t (\theta_s-u(X_s,\theta_s)) \,ds +\sigma W_t\,,\\
\label{eq:dyntheta-Markov}\theta_t&= \vartheta + \frac{1}{\sigma}\int_0^t (\theta_s-\theta_1)(\theta_2-\theta_s) \,dW_s\,.
\end{align}
A priori it is far from obvious that there exists a solution to the system
\eqref{eq:dynX-Markov}, \eqref{eq:dyntheta-Markov} if $u$ is a general 
measurable function. If $u$ is Lipschitz in both variables, then 
a solution exists by the classical theorem on existence and uniqueness of
solutions of SDEs, see, e.g., \cite[Chapter 2, Theorem 7]{krylov1980}.
But our threshold strategies do not fall in that category.
In \citet{sz14} it is shown that the system \eqref{eq:dynX2}, \eqref{eq:dyntheta} has a unique strong maximal local solution,
if $u=u^b$ is a threshold strategy with a threshold function $b\in C^5$,  
satisfying
\begin{align*}
 \left\| \left(1,-b'(\vartheta)\right) \cdot
 \left( \begin{matrix} \sigma \\ \frac{1}{\sigma} (\theta_2-\vartheta)(\vartheta-\theta_1) \end{matrix} \right)  \right\|^2\ge c > 0
\end{align*}
for some constant $c>0$.
The latter condition means that the diffusion must not be parallel to the discontinuity of the drift. Since in our case $b'$ is a continuous function
on a compact interval, this condition is equivalent to $b'(\vartheta) \neq \frac{\sigma^2}{(\theta_2-\vartheta)(\vartheta-\theta_1)}$.
Furthermore, as the diffusion coefficients are Lipschitz and the drift of $X_t$ is bounded, we can apply \cite[Theorem 3.3]{sz14} and get that the system \eqref{eq:dynX-Markov}, \eqref{eq:dyntheta-Markov}
even has a unique strong global solution until the time $\tau$.
Therefore, threshold strategies are indeed admissible if $b$ is sufficiently regular.
In case of a threshold level like the one 
in Figure \ref{fig:thresholdplot} where for small values of $\vartheta$ we have $b(\vartheta)=0$, and then the curve grows monotonously, $b$ is clearly not sufficiently smooth.
However, in that particular case that does not pose a problem, since $b$ equals
zero at the point of the kink and the process is stopped once it reaches zero,
i.e., when we have ruin. Thus we need not consider solutions starting in
that point or passing through it.

The above discussion is summarized in the following definition and the subsequent lemma:

\begin{definition}
Let $u:[0,\infty)\times[\theta_1,\theta_2]\longrightarrow [0,K]$ be a measurable
function. We call $u$ an {\em admissible Markov strategy} if the system \eqref{eq:dynX-Markov}, \eqref{eq:dyntheta-Markov} has a
strong solution $(X_t,\theta_t)_{t \ge 0}$ on $[0,\tau)$, where, as before, $\tau=\inf\{t\ge
0:X_t\le 0\}$. The set of admissible Markov strategies will be denoted by 
$A_M$.
\end{definition}

\begin{lemma}
Let $b:[\theta_1,\theta_2]\longrightarrow [0,\infty)$ be  a function satisfying
\begin{enumerate}
\item $b$ is continuous;
\item $b(\theta_1)=\bar b_1$, $b(\theta_2)=\bar b_2$;
\item on any interval on which $b>0$ holds, $b$ is $C^5$ and 
$b'(\vartheta) \neq \frac{\sigma^2}{(\theta_2-\vartheta)(\vartheta-\theta_1)}$.
\end{enumerate}
Then $u^b\in A_M$.
\end{lemma}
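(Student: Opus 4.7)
The plan is to reduce the admissibility of $u^b$ to an application of \cite[Theorem 3.3]{sz14} on the regions where $b$ is smooth, while exploiting the fact that the only potentially irregular part of the graph of $b$, namely where $b=0$, can be reached by the controlled process only at ruin.

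First I would dispose of the easy pieces. Since the map $\vartheta\mapsto\frac{1}{\sigma}(\vartheta-\theta_1)(\theta_2-\vartheta)$ is bounded and Lipschitz on $[\theta_1,\theta_2]$ and vanishes at the endpoints, equation \eqref{eq:dyntheta-Markov} admits a unique strong solution $(\theta_t)_{t\ge 0}$, independent of $u^b$ and of $X$, with trajectories staying in $[\theta_1,\theta_2]$ (and in $(\theta_1,\theta_2)$ for a strictly interior initial value, as follows from the representation derived in Section~\ref{sec:Filtering}). The candidate control $u^b(x,\vartheta)=K\mathbf{1}_{\{x\ge b(\vartheta)\}}$ is measurable and $[0,K]$-valued, hence it qualifies as a Markov strategy as soon as we produce a strong solution for $X$.

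For the $X$-equation the drift $(x,\vartheta)\mapsto \vartheta-u^b(x,\vartheta)$ is discontinuous across the curve $\Gamma:=\{(x,\vartheta):x=b(\vartheta)\}$. Let $U:=\{\vartheta\in(\theta_1,\theta_2):b(\vartheta)>0\}$, which is open by continuity of $b$, and write $U$ as a disjoint union of open intervals $U_j$. On each $U_j$, hypothesis (iii) supplies exactly the regularity and transversality assumptions required by \cite[Theorem 3.3]{sz14}: $b$ is $C^5$ on $U_j$ and the diffusion vector of $(X,\theta)$ is non-parallel to the tangent of $\Gamma$. Since the diffusion coefficients of \eqref{eq:dynX-Markov}--\eqref{eq:dyntheta-Markov} are globally Lipschitz and the drift of $X$ is bounded, that theorem yields a unique strong solution of the system on $(0,\infty)\times U_j$ up to the first time it exits this strip.

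The main obstacle, and the step I would spend most care on, is pasting these local pieces together at a boundary point $\vartheta_0$ of $U$ with $b(\vartheta_0)=0$. If at some time $t<\tau$ the process satisfies $\theta_t=\vartheta_0$ and $X_t>0$, then in a whole open neighbourhood of $(X_t,\vartheta_0)$ we have $x>0\ge b(\vartheta)$, so $u^b\equiv K$ there and the drift of $X$ reduces to $\vartheta-K$, which is Lipschitz; hence standard SDE theory (e.g.\ \cite[Chapter 2, Theorem 7]{krylov1980}) extends the solution across $\vartheta_0$. The only alternative is $X_t=0$, but this means $t=\tau$, so no further extension is needed. On any interval where $b\equiv 0$ the situation is analogous: for $X>0$ the drift is $\vartheta-K$ and classical theory applies directly. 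Concatenating the solutions obtained on the components $U_j$ and on the connected open sets where $b$ vanishes, with the junctions handled as above, produces a unique strong solution of \eqref{eq:dynX-Markov}--\eqref{eq:dyntheta-Markov} on $[0,\tau)$, and thus $u^b\in A_M$.
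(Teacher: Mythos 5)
Your argument follows the same route the paper takes: invoke \cite[Theorem~3.3]{sz14} on the part of the threshold curve where $b>0$ (using the $C^5$ regularity and the transversality of the diffusion direction to the discontinuity), and observe that the only places where $b$ may fail to be smooth are points with $b(\vartheta)=0$, which the controlled process cannot pass through without first being absorbed at $x=0$. Your write-up is more explicit about the pasting across the components of $\{b>0\}$, which is a harmless elaboration of the paper's terser remark. One small slip: the inequality ``$x>0\ge b(\vartheta)$'' in a neighbourhood of $(X_t,\vartheta_0)$ is not what you mean (since $b\ge 0$ throughout); you want $x>b(\vartheta)\ge 0$, which indeed holds on a neighbourhood of $(X_t,\vartheta_0)$ by continuity of $b$ and $X_t>b(\vartheta_0)=0$. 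With that correction the argument is sound.
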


In the following we give a characterization of value functions corresponding to
optimal threshold strategies.

\begin{definition}
A function $w: \bar{\Omega} \rightarrow \R$ fulfills $w_x \le 1$ 
{\it in the viscosity sense}, if 
for all $\psi \in C^2(\bar{\Omega})$ and 
for all $(\bar{x},\bar{\vartheta}) \in \Omega$ 
such that $(\bar{x},\bar{\vartheta})$ is a minimum of
$w-\psi$ and $w(\bar{x},\bar{\vartheta})=\psi(\bar{x},\bar{\vartheta})$,
we have $\psi_x(\bar{x},\bar{\vartheta}) \le 1$.

In other words, $w_x \le 1$ 
{\it in the viscosity sense}, if
$w$ is a viscosity supersolution of $\psi_x=1$.  
\end{definition}

\begin{definition}
 A function $w: \bar{\Omega} \rightarrow \R$ fulfills ($w_x(x,\vartheta) \le 1 \Leftrightarrow x \ge b(\vartheta)$) {\it in the viscosity sense}, if
 \[
  \psi_x(\bar{x},\bar{\vartheta}) \le 1 \Leftrightarrow \bar{x} \ge b(\bar{\vartheta})
 \]
 for all $(\bar{x},\bar{\vartheta}) \in \Omega$ and for all $\psi \in C^2(\bar{\Omega})$ such that $(\bar{x},\bar{\vartheta})$ is a minimum of $w-\psi$ with $w(\bar{x},\bar{\vartheta})=\psi(\bar{x},\bar{\vartheta})$.
\end{definition}

\begin{remark}
 \begin{enumerate}
\item As for previous definitions of ``viscosity sense'' we have for $w \in C^1$ that [($w_x(x,\vartheta) \le 1 \Leftrightarrow x \ge b(\vartheta)$) in the viscosity sense] $\Leftrightarrow$ [($w_x(x,\vartheta) \le 1 \Leftrightarrow x \ge b(\vartheta)$) $\forall (x,\vartheta) \in \Omega$].
  \item Note that [$\psi_x(x,\vartheta) \le 1 \Leftrightarrow x \ge b(\vartheta)$ $\forall (x,\vartheta) \in \Omega$] $\Leftrightarrow$ [$(1-\psi_x(x,\vartheta))(x-b(\vartheta)) \ge 0$) $\forall (x,\vartheta) \in \Omega$].
 \end{enumerate}
\end{remark}

In the following we denote the value function coming from a threshold strategy 
$u^b\in A_M$ as $J^{(b)}:=J^{(u^b)}$ and\\ $dX_t:=(\theta_t-u_t^b) dt + \sigma dW_t$\,, with $X_0=x$.\\
As in \citet{asmussen1997} one can not guarantee enough smoothness of $J^{(b)}$  for an arbitrary
threshold strategy.
Therefore we characterize $J^{(b)}$ as a viscosity solution of an appropriate PDE.
 
\begin{lemma}
 If the threshold strategy satisfies $u^b \in A_M$, and 
  \[
  J^{(b)}(x,\vartheta)=\E_{x,\vartheta} \left[ \int_0^\tau e^{-\delta t} u^b(X_t,\theta_t) \, dt \right]
 \]
 is continuous, $J^{(b)}$ is a viscosity solution of
  \begin{align}
 \label{eq:HJBb}
  -\delta J^{(b)} + \cL J^{(b)} + K (1-J_x^{(b)}) 1_{\lbrace x \ge b(\vartheta) \rbrace} = 0
 \end{align}
 with the same boundary conditions as for $V$ in \eqref{eq:BC1},\eqref{eq:BC2},\eqref{eq:BC3}.
\end{lemma}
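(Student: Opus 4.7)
The strategy is to combine a flow identity for the fixed Markov strategy $u^b$ with It\^o's formula applied to a smooth test function, following the standard viscosity-solution argument but now for the value of a \emph{fixed} policy rather than the optimal one.

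First I would establish the flow identity. Since $u^b\in A_M$, the system \eqref{eq:dynX-Markov}--\eqref{eq:dyntheta-Markov} has a unique strong solution $(X_t,\theta_t)$ on $[0,\tau)$ which is a strong Markov process; hence for every bounded stopping time $\eta$
\begin{equation*}
J^{(b)}(x,\vartheta)=\E_{x,\vartheta}\!\left[\int_0^{\tau\wedge\eta} e^{-\delta s}\,u^b(X_s,\theta_s)\,ds+e^{-\delta(\tau\wedge\eta)}J^{(b)}(X_{\tau\wedge\eta},\theta_{\tau\wedge\eta})\right].
\end{equation*}
This is the usual Bellman identity for a fixed policy and requires no supremum.

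For the subsolution property I would fix $(\bar x,\bar\vartheta)\in\Omega$ and a test function $\phi\in C^2(\bar\Omega)$ such that $J^{(b)}-\phi$ attains a local maximum with equality at $(\bar x,\bar\vartheta)$. Choose $\eta$ as the first exit time of $(X_t,\theta_t)$ from a small ball $B_r(\bar x,\bar\vartheta)$, truncated at a small constant $h>0$. Using $J^{(b)}(\bar x,\bar\vartheta)=\phi(\bar x,\bar\vartheta)$ and $J^{(b)}\le\phi$ locally, the flow identity gives
\begin{equation*}
0\le\E_{\bar x,\bar\vartheta}\!\left[\int_0^{\tau\wedge\eta} e^{-\delta s}u^b(X_s,\theta_s)\,ds+e^{-\delta(\tau\wedge\eta)}\phi(X_{\tau\wedge\eta},\theta_{\tau\wedge\eta})-\phi(\bar x,\bar\vartheta)\right].
\end{equation*}
Applying It\^o's formula to $e^{-\delta s}\phi(X_s,\theta_s)$ (the stochastic integral vanishes in expectation because $\phi_x,\phi_\vartheta$ are bounded on $B_r$ and $\eta$ is bounded) rewrites the right-hand side as
\begin{equation*}
\E_{\bar x,\bar\vartheta}\!\left[\int_0^{\tau\wedge\eta} e^{-\delta s}\bigl((\cL-\delta)\phi(X_s,\theta_s)+u^b(X_s,\theta_s)(1-\phi_x(X_s,\theta_s))\bigr)ds\right]\ge 0.
\end{equation*}
Dividing by $\E_{\bar x,\bar\vartheta}[\tau\wedge\eta]$ and sending $h\downarrow 0$, the continuity of $\phi,\phi_x,\cL\phi$ produces the pointwise subsolution inequality at $(\bar x,\bar\vartheta)$. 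The supersolution inequality is obtained in a completely analogous way with a test function $\psi$ such that $J^{(b)}-\psi$ attains a local minimum at the point and the inequalities reversed.

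The main obstacle is the discontinuity of $u^b$ on the interface $\{x=b(\vartheta)\}$. Away from that interface, continuity of sample paths and of $b$ guarantees $u^b(X_s,\theta_s)\to K\,1_{\{\bar x\ge b(\bar\vartheta)\}}$ for $s$ small, so the limit procedure yields exactly the indicator term appearing in \eqref{eq:HJBb}. On the interface itself, the occupation-time average of $u^b$ over $[0,\tau\wedge\eta]$ generically lies strictly between $0$ and $K$, and the limit argument does not directly produce the prescribed coefficient $K\,1_{\{\bar x\ge b(\bar\vartheta)\}}$. I would close this case by splitting according to the sign of $1-\phi_x(\bar x,\bar\vartheta)$ (respectively $1-\psi_x$): in each open half-space adjacent to the interface the pointwise inequality already holds by the interior argument, and continuity of the test function, together with the convention $1_{\{x\ge b(\vartheta)\}}|_{x=b(\vartheta)}=1$, propagates the correct inequality to the boundary point. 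This boundary bookkeeping, rather than the main Bellman--It\^o computation, is where the technical work really lies.
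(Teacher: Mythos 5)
Your core machinery — the flow (Dynkin/Bellman) identity for the fixed Markov policy $u^b$, followed by It\^o's formula on a $C^2$ test function, division by $\E[\tau\wedge\eta]$ and passage to the limit $\eta\downarrow 0$ — is exactly what the paper's one-line proof (``standard techniques similar to Theorem \ref{thm:viscosity}'') is pointing at. Since the policy is fixed, both the sub- and supersolution inequalities come from the same identity, only the direction in which the test function touches $J^{(b)}$ changes; you have this right, and it is in fact cleaner than the argument for $V$, which needs $\varepsilon$-optimal strategies and the $\varepsilon<\eta^2$ device.

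Two gaps remain. First, you never address the boundary conditions, which the lemma explicitly asserts. The point $(x=0)$ and $(x\to\infty)$ limits are as for $V$, but \eqref{eq:BC3} (convergence to the one-dimensional Asmussen--Taksar solution as $\vartheta\to\theta_i$) requires the hypothesis $b(\theta_i)=\bar b_i$; this is exactly the one ingredient the paper's proof does spell out, and it should appear in your argument.

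Second, and more seriously, your proposed closure of the interface case $\bar x=b(\bar\vartheta)$ does not work as written. You argue that ``in each open half-space adjacent to the interface the pointwise inequality already holds by the interior argument'' and then propagate to the boundary by continuity. But the test function $\phi$ (resp.\ $\psi$) touches $J^{(b)}$ from above (resp.\ below) only at the single point $(\bar x,\bar\vartheta)$; at nearby points in $\{x>b(\vartheta)\}$ or $\{x<b(\vartheta)\}$ it is not a valid test function, so the interior argument produces no inequality there that could be passed to the limit. What the limiting procedure actually yields at an interface point is
\[
(\cL-\delta)\phi(\bar x,\bar\vartheta)+\lambda\,K\bigl(1-\phi_x(\bar x,\bar\vartheta)\bigr)\ge 0
\]
where $\lambda\in[0,1]$ is the short-time occupation fraction of $\{x\ge b(\vartheta)\}$ along the process started at $(\bar x,\bar\vartheta)$ (generically $\lambda\in(0,1)$ because the diffusion is transversal to the interface). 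This only implies the asserted inequality with coefficient $K$ in place of $\lambda K$ when $1-\phi_x(\bar x,\bar\vartheta)\ge 0$; when $1-\phi_x<0$ it is strictly weaker. The analogous problem appears on the supersolution side with the sign reversed. Closing this case requires something extra: either a semicontinuous-envelope interpretation of the discontinuous Hamiltonian at the interface, or a smooth-fit type argument showing that every admissible test gradient at the interface satisfies $\phi_x=1$, or invoking the additional hypothesis $J_x^{(b)}\le 1 \Leftrightarrow x\ge b(\vartheta)$ used later in Theorem \ref{th:J-visc-sol}. As written, your resolution leaves a genuine hole.
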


\begin{proof}
 Can be shown using standard techniques similar to the proof of Theorem
\ref{thm:viscosity}.  
To show that  $J^{(b)}$ satisfies the boundary condition \eqref{eq:BC3}, i.e.,
$J^{(b)}$ converges to the non-Bayesian solution for $\theta\rightarrow \theta_i$, $i=1,2$, we use the condition $b(\theta_i)=b_i$.
\end{proof}

The following theorem provides the link between the value of a threshold 
strategy and the HJB equation \eqref{eq:HJB}.
\begin{theorem}\label{th:J-visc-sol}
 Let a threshold level $b:[\theta_1,\theta_2]\longrightarrow [0,\infty)$ exist
with $u^b \in A_M$ and ($J_x^{(b)}(x,\vartheta) \le 1 \Leftrightarrow x \ge b(\vartheta)$) in the viscosity sense.

Then $J^{(b)}$ is a viscosity solution of \eqref{eq:HJB}.
\end{theorem}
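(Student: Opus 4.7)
The plan is to reduce the claim to the preceding lemma, which characterises $J^{(b)}$ as a viscosity solution of the indicator equation \eqref{eq:HJBb}, and then to invoke the threshold hypothesis precisely to replace the discontinuous indicator $1_{\{x\ge b(\vartheta)\}}$ by the Bellman supremum where necessary. The common observation for both viscosity inequalities is that
\[
\sup_{u\in[0,K]}u(1-p)=K(1-p)^+\,,
\]
so that the nonlinearities in \eqref{eq:HJB} and \eqref{eq:HJBb} differ only by which cutoff is used: the sign of $1-\phi_x$ or the indicator $1_{\{x\ge b(\vartheta)\}}$.

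For the subsolution inequality of \eqref{eq:HJB}, fix $\phi\in C^2(\bar\Omega)$ and a maximum point $(\bar x,\bar\vartheta)\in\Omega$ of $J^{(b)}-\phi$ with $J^{(b)}(\bar x,\bar\vartheta)=\phi(\bar x,\bar\vartheta)$. The preceding lemma yields
\[
-\delta\phi(\bar x,\bar\vartheta)+\cL\phi(\bar x,\bar\vartheta)+K(1-\phi_x(\bar x,\bar\vartheta))\,1_{\{\bar x\ge b(\bar\vartheta)\}}\ge 0\,.
\]
The elementary inequality $K(1-\phi_x)^+\ge K(1-\phi_x)\,1_{\{\bar x\ge b(\bar\vartheta)\}}$ holds unconditionally: the right-hand side is either zero or equals $K(1-\phi_x)$, which is in turn dominated by its positive part. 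Replacing the indicator term by the larger Bellman supremum therefore preserves the inequality, giving the HJB subsolution inequality. No use of the threshold hypothesis is needed at this stage.

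For the supersolution inequality, fix $\psi\in C^2(\bar\Omega)$ and a minimum point $(\bar x,\bar\vartheta)$ of $J^{(b)}-\psi$ with equality there. The lemma gives
\[
-\delta\psi(\bar x,\bar\vartheta)+\cL\psi(\bar x,\bar\vartheta)+K(1-\psi_x(\bar x,\bar\vartheta))\,1_{\{\bar x\ge b(\bar\vartheta)\}}\le 0\,.
\]
Now the threshold hypothesis, read in the viscosity sense, states exactly that $\psi_x(\bar x,\bar\vartheta)\le 1$ if and only if $\bar x\ge b(\bar\vartheta)$. Hence, if $\bar x\ge b(\bar\vartheta)$ then $\sup_{u\in[0,K]}u(1-\psi_x)=K(1-\psi_x)$, matching the indicator term, whereas if $\bar x<b(\bar\vartheta)$ then $\psi_x>1$ and $\sup_{u\in[0,K]}u(1-\psi_x)=0$, again matching the vanishing indicator term. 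In both cases the Bellman supremum coincides with the indicator nonlinearity at $(\bar x,\bar\vartheta)$, and the HJB supersolution inequality follows immediately.

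The main obstacle is not in the present theorem but in the preceding lemma, which establishes the viscosity characterisation of $J^{(b)}$ for the indicator equation \eqref{eq:HJBb}; the current argument is essentially algebraic. The threshold hypothesis has been phrased in the viscosity sense precisely so as to apply to the minimum-test functions appearing in the supersolution step, which is the one place where switching from the indicator to the Bellman supremum requires sign information on $1-\psi_x$.
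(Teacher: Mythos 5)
Your argument is correct and follows essentially the same route as the paper's proof: both reduce the claim to the preceding lemma characterizing $J^{(b)}$ as a viscosity solution of the indicator equation \eqref{eq:HJBb}, observe that the Bellman term $\sup_{u\in[0,K]}u(1-p)=K(1-p)^+$ dominates the indicator nonlinearity unconditionally (giving the subsolution inequality for free), and then invoke the threshold hypothesis in the viscosity sense to show that the two nonlinearities coincide at supersolution test points. The only cosmetic difference is that you phrase the comparison of nonlinearities via the positive-part identity, whereas the paper performs the same case distinction on the sign of $1-\phi_x$ (resp. $1-\psi_x$) directly.
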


\begin{proof}
First, we show that if $J^{(b)}$ is a viscosity subsolution of \eqref{eq:HJBb}, it is also a viscosity subsolution of \eqref{eq:HJB}.\\
Since $J^{(b)}$ is a viscosity subsolution of \eqref{eq:HJBb}, it holds that $\forall \phi \in C^2(\bar{\Omega})$, $\phi \ge J^{(b)}$ in $\Omega$ and for all $(\bar{x},\bar{\vartheta})$ where $\phi(\bar{x},\bar{\vartheta})=J^{(b)}(\bar{x},\bar{\vartheta})$,
 \[
  -\delta \phi + \cL \phi + K (1-\phi_x) 1_{\lbrace \bar{x} \ge b(\bar{\vartheta}) \rbrace} \ge 0\,.
 \]
We have to show that
 \[
  -\delta \phi + \cL \phi + K (1-\phi_x) 1_{\lbrace \phi_x \le 1 \rbrace} \ge 0
 \]
holds in the same points $(\bar{x},\bar{\vartheta})$. Therefore, it is enough  to show that
\[
 -\delta \phi + \cL \phi + K (1-\phi_x) 1_{\lbrace \bar{x} \ge b(\bar{\vartheta}) \rbrace} \le -\delta \phi + \cL \phi + K (1-\phi_x) 1_{\lbrace \phi_x \le 1 \rbrace}\,,
\]
which is equivalent to
\[
 (1-\phi_x) 1_{\lbrace \bar{x} \ge b(\bar{\vartheta}) \rbrace} \le (1-\phi_x) 1_{\lbrace \phi_x \le 1 \rbrace}\,.
\]
If $\phi_x \le 1$, we need that $(1-\phi_x) 1_{\lbrace \bar{x} \ge b(\bar{\vartheta}) \rbrace} \le (1-\phi_x)$, which is obviously true. If $\phi_x > 1$, we need that $(1-\phi_x) 1_{\lbrace \bar{x} \ge b(\bar{\vartheta}) \rbrace} \le 0$. Since $(1-\phi_x) < 0$, this holds, too.\\
So $J^{(b)}$ is a viscosity subsolution of \eqref{eq:HJB}.\\

It remains to show that if $J^{(b)}$ is a viscosity supersolution of \eqref{eq:HJBb}, it is also a viscosity supersolution of \eqref{eq:HJB}.\\
Since $J^{(b)}$ is a viscosity supersolution of \eqref{eq:HJBb}, it holds that $\forall \psi \in C^2(\bar{\Omega})$, $\psi \le J^{(b)}$ in $\Omega$ and for all $(\bar{x},\bar{\vartheta})$ where $\psi(\bar{x},\bar{\vartheta})=J^{(b)}(\bar{x},\bar{\vartheta})$ it holds that
 \[
  -\delta \psi + \cL \psi + K (1-\psi_x) 1_{\lbrace \bar{x} \ge b(\bar{\vartheta}) \rbrace} \le 0\,.
 \]
We have to show that
 \[
  -\delta \psi + \cL \psi + K (1-\psi_x) 1_{\lbrace \phi_x \le 1 \rbrace} \le 0
 \]
holds in the same points $(\bar{x},\bar{\vartheta})$. Hence, we need that
\[
 (1-\psi_x) 1_{\lbrace \bar{x} \ge b(\bar{\vartheta}) \rbrace} \ge (1-\psi_x) 1_{\lbrace \psi_x \le 1 \rbrace}\,.
\]
If $\psi_x < 1$ or $\psi_x > 1$, we get $(1-\psi_x) 1_{\lbrace \bar{x} \ge b(\bar{\vartheta}) \rbrace} \le (1-\psi_x)$ and $(1-\psi_x) 1_{\lbrace \bar{x} \ge b(\bar{\vartheta}) \rbrace} \le 0$, respectively. Therefore, we have to show that $(1-\psi_x) 1_{\lbrace \bar{x} \ge b(\bar{\vartheta}) \rbrace} = (1-\psi_x)$, if $\psi_x < 1$ and $(1-\psi_x) 1_{\lbrace \bar{x} \ge b(\bar{\vartheta}) \rbrace} = 0$, if $\psi_x > 1$. Hence, we need that $\psi_x(\bar{x},\bar{\vartheta}) \le 1 \Leftrightarrow \bar{x} \ge b(\bar{\vartheta})$.\\

From the statement of the theorem we have that  ($J_x^{(b)}(x,\vartheta) \le 1 \Leftrightarrow x \ge b(\vartheta)$) in the viscosity sense,
which is actually equivalent to what we need. Thus, $J^{(b)}$ is a viscosity supersolution of \eqref{eq:HJB}.\\

Altogether, $J^{(b)}$ is a viscosity solution of \eqref{eq:HJB}.
\end{proof}

\begin{cor}
 Let $J^{(b)}$ be like in Theorem \ref{th:J-visc-sol}. Then $J^{(b)}=V$.
\end{cor}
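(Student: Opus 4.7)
My plan is to reduce the statement to the uniqueness corollary for bounded viscosity solutions of the HJB equation \eqref{eq:HJB} that follows from Theorem \ref{thm:comparison}. The corollary tells us that $V$ is the unique bounded viscosity solution of \eqref{eq:HJB} with boundary conditions \eqref{eq:BC1}, \eqref{eq:BC2}, \eqref{eq:BC3}. So it suffices to verify that $J^{(b)}$ is another bounded viscosity solution satisfying those same boundary conditions, and the equality $J^{(b)}=V$ then follows from uniqueness.

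First I would invoke Theorem \ref{th:J-visc-sol}, which under the hypotheses of the corollary asserts that $J^{(b)}$ is a viscosity solution of \eqref{eq:HJB}. Next I would check boundedness: since the controls take values in $[0,K]$, we have $0 \le J^{(b)}(x,\vartheta) \le \E_{x,\vartheta}(\int_0^\infty e^{-\delta t} K\,dt)=\frac{K}{\delta}$. For the boundary conditions, the lemma preceding Theorem \ref{th:J-visc-sol} (the one characterizing $J^{(b)}$ as a viscosity solution of \eqref{eq:HJBb}) already records that $J^{(b)}$ satisfies the boundary conditions \eqref{eq:BC1}, \eqref{eq:BC2}, \eqref{eq:BC3}: the condition $J^{(b)}(0,\vartheta)=0$ is immediate from the definition of $\tau$, the condition $J^{(b)}(x,\vartheta)\to \frac{K}{\delta}$ as $x\to\infty$ (uniformly in $\vartheta$) follows by the same pathwise lower bound used in the proof of Lemma \ref{th:x-infinity} combined with the upper bound $K/\delta$, and the conditions at $\vartheta=\theta_i$ follow from $b(\theta_i)=\bar b_i$, which reduces the problem to the non-Bayesian threshold case from \citet{asmussen1997}.

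With these two facts in hand, both $V$ and $J^{(b)}$ are bounded continuous viscosity solutions of \eqref{eq:HJB} agreeing on $\partial\bar\Omega$ and having the same limit at infinity. Applying the comparison theorem (Theorem \ref{thm:comparison}) in both directions — once with $w=J^{(b)}$, $v=V$ and once with the roles reversed — yields $J^{(b)}\le V$ and $V\le J^{(b)}$, so $J^{(b)}=V$.

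Conceptually, the only step that requires real work is the appeal to Theorem \ref{th:J-visc-sol} itself; everything else in the corollary is bookkeeping about boundary conditions and boundedness. The implicit obstacle is really the hypothesis that $J_x^{(b)}\le 1\Leftrightarrow x\ge b(\vartheta)$ in the viscosity sense — this is what encodes the optimality of the threshold at $b(\vartheta)$ inside the Bellman equation. Once that has been verified for a candidate $b$, the present corollary is essentially automatic.
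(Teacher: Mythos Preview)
Your proposal is correct and follows the same approach as the paper: both appeal to the uniqueness of the bounded viscosity solution of \eqref{eq:HJB} (established via Theorem \ref{thm:comparison}) after Theorem \ref{th:J-visc-sol} identifies $J^{(b)}$ as such a solution. The paper's proof is a one-liner that takes the boundedness and boundary conditions for granted (they were recorded in the lemma preceding Theorem \ref{th:J-visc-sol}), whereas you spell them out explicitly.
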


\begin{proof}
 Since $V$ is unique viscosity solution of \eqref{eq:HJB} due to Theorem \ref{thm:comparison}, $J^{(b)}=V$.
\end{proof}

Altogether, we now know that if $u^b \in A_M$ with a threshold level $b$ 
and corresponding value function $J^{(b)}$ such that
$J^{(b)}_x(x,\vartheta) \le 1 \Leftrightarrow x \ge b(\vartheta)$ in the viscosity sense, then $J^{(b)}=V$ and $u^b$ is the optimal control strategy.


\section{Finite time ruin probabilities}
\label{sec:Ruin}

In this section we will determine the finite time ruin probability of the uncontrolled process
\[
 p^Z := \P_{z,\vartheta}(\tau^Z \le t)\,,
\]
where $\tau^Z:= \inf \lbrace s \ge 0 \, \vline \, Z_s \le 0 \rbrace$, for all $t\ge0$.
$\P_{z,\vartheta}$ abbreviates the probability given the initial values $Z_0=z$ and $\theta_0=\vartheta$.\\
Furthermore, we will consider the finite time ruin probability of our
controlled process, assuming that the control variable follows a threshold
strategy as defined in Section \ref{sec:Threshold}. So
\[
 dX_t=\theta_t \,dt + \sigma dW_t - dL_t\,,
\]
where $X_0=x=z$, and $L_t=\int_0^t u_s^b \,ds$.\\
The finite time ruin probability is denoted as
\[
 p^X := \P_{x,\vartheta}(\tau \le t)\,.
\]
Trivially, for $z\le 0$, the finite time ruin probabilities $p^Z$ and $ p^X$ are both equal to 1. Subsequently, we will tacitly assume that $z > 0$.\\

\begin{remark}
For constant and observable $\theta_t=\bar{\theta}$ a classical application of Girsanov's theorem and the reflection principle
yields that the finite time ruin probability of the uncontrolled process $Z$ is given by
 \[
  p^Z_{\bar{\theta}}:=\E \left( \left( 1+e^{-\frac{2\bar{\theta}}{\sigma^2}Z_t} \right) \, 1_{\lbrace Z_t \le 0\rbrace } \right)
= N\left( - \frac{\bar{\theta}t+z}{\sigma \sqrt{t}} \right)+e^{-\frac{2\bar{\theta}z}{\sigma^2}}N\left(\frac{\bar{\theta}t-z}{\sigma \sqrt{t}}\right) \,,
 \]
where $N$ is the cumulative distribution function of the standard normal distribution (cf. \citet[p. 197]{karatzas1991}).
\end{remark}

Now we want to calculate the finite time ruin probability for unobservable $\theta$.
In \citet{elliott2011} results from filtering theory are applied to a general
hidden Markov model and a PDE is derived, the solution of which is proven to be
the finite time survival probability. We apply a different filter to overcome
uncertainty, but after that, we
use a similar result for our processes $Z$ and $X$. From the finite time survival probability we easily get the finite time ruin
probability, which is just the complementary probability.\\
Let
\[
 \cL^Z(\rho(z,\vartheta))=\vartheta \rho_z + \frac{\sigma^2}{2} \rho_{zz} + \frac{1}{2\sigma^2} (\vartheta-\theta_1)^2(\theta_2-\vartheta)^2 \rho_{\vartheta \vartheta} + (\vartheta-\theta_1)(\theta_2-\vartheta) \rho_{z \vartheta}
\]
be the infinitesimal generator of $Z$ and let $\cL^{Z,(0,\infty)}(\rho(z,\vartheta)):=1_{(0,\infty)}(z) \cL^Z(\rho(z,\vartheta))$,
where $1_{(0,\infty)}$ is the indicator function of the domain $(0,\infty)$.\\
Furthermore, let
\[
 \cL^X(\rho(x,\vartheta))=(\vartheta - u^b) \rho_x + \frac{\sigma^2}{2} \rho_{xx} + \frac{1}{2\sigma^2} (\vartheta-\theta_1)^2(\theta_2-\vartheta)^2 \rho_{\vartheta \vartheta} + (\vartheta-\theta_1)(\theta_2-\vartheta) \rho_{x \vartheta}
\]
be the infinitesimal generator of $X$ and let $\cL^{X,(0,\infty)}(\rho(z,\vartheta)):=1_{(0,\infty)}(x) \cL^X(\rho(x,\vartheta))$.\\
Then the following theorem, similar to \citet[Theorem 4.1]{elliott2011}, holds.

\begin{theorem}
\begin{enumerate}
\item If $\Phi^Z(t,z,\vartheta)$ is a smooth solution to
 \begin{align}
  \frac{\partial \Phi^Z}{\partial t}(t,z,\vartheta)=\cL^{Z,(0,\infty)}\left(\Phi^Z(t,z,\vartheta)\right)
 \end{align}
 with initial condition
\[
  \Phi^Z(0,z,\vartheta)=1_{(0,\infty)}(z)
\]
and boundary conditions
\begin{align*}
 \Phi^Z(t,0,\vartheta)&=0\,, \\
 \Phi^Z(t,B,\vartheta)&=1\, \mbox{ for }B \rightarrow \infty\,,
\end{align*}
and $\Phi^Z(t,z,\theta_i)$, $i \in \lbrace \theta_1, \theta_2 \rbrace$ is the solution of the PDE for fixed $\vartheta=\theta_i$, then
\[
 \Phi^Z(t,z,\vartheta)=\P_{z,\vartheta}(\tau^Z > t)\,.
\]

\item If $\Phi^X(t,x,\vartheta)$ is a smooth solution to
 \begin{align}
  \frac{\partial \Phi^X}{\partial t}(t,x,\vartheta)=\cL^{X,(0,\infty)}\left(\Phi^X(t,x,\vartheta)\right)
 \end{align}
 with initial condition
\[
  \Phi^X(0,x,\vartheta)=1_{(0,\infty)}(x)
\]
and boundary conditions
\begin{align*}
 \Phi^X(t,0,\vartheta)&=0\,, \\
 \Phi^X(t,B,\vartheta)&=1\, \mbox{ for }B \rightarrow \infty\,,
\end{align*}
and $\Phi^X(t,x,\theta_i)$, $i \in \lbrace \theta_1, \theta_2 \rbrace$ is the solution of the PDE for fixed $\vartheta=\theta_i$, then
\[
 \Phi^X(t,x,\vartheta)=\P_{x,\vartheta}(\tau > t)\,.
\]
\end{enumerate}
\end{theorem}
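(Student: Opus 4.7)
The plan is a Feynman--Kac argument. I will apply It\^o's formula to the space--time process $\Phi^Z$ composed with the state process stopped at the ruin time, and exploit the PDE to show that the resulting process is a bounded martingale; the desired identity then drops out by evaluating this martingale at the two endpoints and using the initial datum together with the Dirichlet boundary condition at $z=0$.

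For part (i), I would fix $t>0$ and set
\[
M_s := \Phi^Z\bigl(t-s,\,Z_{s\wedge\tau^Z},\,\theta_{s\wedge\tau^Z}\bigr), \qquad s\in[0,t].
\]
On $\{u<\tau^Z\}$ one has $Z_u>0$, so the indicator in $\cL^{Z,(0,\infty)}$ equals $1$ and the PDE reduces to $\partial_t\Phi^Z=\cL^Z\Phi^Z$ along the path. Applying It\^o's formula to the smooth function $\Phi^Z$ on $[0,s\wedge\tau^Z]$, the finite-variation part
\[
\int_0^{s\wedge\tau^Z}\bigl(-\partial_t\Phi^Z+\cL^Z\Phi^Z\bigr)(t-u,Z_u,\theta_u)\,du
\]
vanishes identically, so $M$ is a local martingale. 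Because $\Phi^Z$ is bounded (a priori by the parabolic maximum principle on the strip, given the bounded initial and boundary data), $M$ is in fact a bounded martingale. Evaluating at the endpoints, one has $M_0=\Phi^Z(t,z,\vartheta)$, while at $s=t$ the initial condition gives $M_t=1_{(0,\infty)}(Z_t)=1$ on $\{\tau^Z>t\}$ and the boundary condition at $z=0$ gives $M_t=\Phi^Z(t-\tau^Z,0,\theta_{\tau^Z})=0$ on $\{\tau^Z\le t\}$. Taking expectations then yields $\Phi^Z(t,z,\vartheta)=\P_{z,\vartheta}(\tau^Z>t)$, as claimed.

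Part (ii) will follow by the same argument with $Z$, $\tau^Z$, $\cL^Z$ replaced by $X$, $\tau$, $\cL^X$ throughout; the crucial input is that the underlying threshold strategy $u^b$ lies in $A_M$, so by the preceding section the system \eqref{eq:dynX-Markov}, \eqref{eq:dyntheta-Markov} admits a strong solution on $[0,\tau)$ and $\cL^X$ really is the infinitesimal generator of $(X,\theta)$ on $C^2$-functions. I expect the main obstacle to lie not in the It\^o computation itself but in the subtle interplay between the degenerate diffusion and the indicator $1_{(0,\infty)}$ in the generator: the PDE coincides with the clean parabolic equation $\partial_t\Phi=\cL\Phi$ only on the open spatial interior, so one cannot regularise across the boundary and it is essential to stop the state process at the ruin time before applying It\^o's formula. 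Once this is in place, the remainder of the argument is entirely standard, and the boundary behaviour as $B\to\infty$ is only needed to single out the correct solution of the PDE, not for the martingale step.
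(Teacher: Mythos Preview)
Your Feynman--Kac argument is correct and is exactly the standard approach one would expect here; the paper itself does not spell out a proof but merely refers to \citet[Theorem 4.1]{elliott2011}, whose argument proceeds along the same lines. Your discussion of the stopping at $\tau^Z$ (so that the indicator in $\cL^{Z,(0,\infty)}$ is always $1$ along the path and the drift vanishes) and the evaluation of $M$ at the two endpoints via the initial and Dirichlet boundary data is precisely what is needed.
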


\begin{proof}
Similar to the proof of \citet[Theorem 4.1]{elliott2011}.
\end{proof}

\begin{remark}
From the finite time survival probability we easily get the finite time ruin probability
\begin{align*}
  p^Z=1-\Phi^Z(t,z,\vartheta)\,,\\
  p^X=1-\Phi^X(t,x,\vartheta)\,.
\end{align*}
\end{remark}

We solved the PDEs numerically. Note that in our computations the boundary conditions for the $\vartheta$-variable are the numerical solutions of the corresponding PDEs
for fixed $\vartheta=\theta_1$ and $\vartheta=\theta_2$, respectively.

Figure \ref{fig:ruinprob} shows the finite time ruin probability in the uncontrolled and the controlled situation
for the parameter choices $\sigma = 1$, $\theta_1 =1$, $t =10$, and $\theta_2 =2$, $K =0.9$, $B =4.17$, and $\theta_2 =4$, $K =1.5$, $B =7.46$.
The boundary curve coincides with the deterministic situation, where $\theta=\theta_i$, $i=1,2$.
One can see that, depending on the estimate of $\theta$, the difference between the controlled and the uncontrolled situation varies.

\begin{figure}[ht]
 \begin{center}
  \includegraphics[scale=0.3]{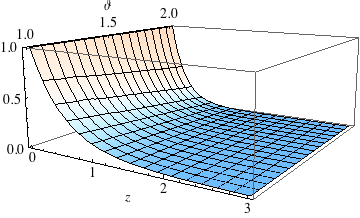}
  \includegraphics[scale=0.3]{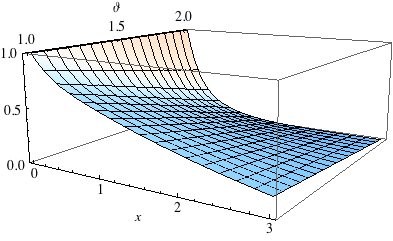}
  \includegraphics[scale=0.3]{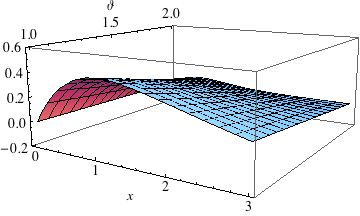}
  
  \includegraphics[scale=0.3]{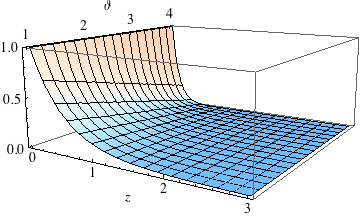}
  \includegraphics[scale=0.3]{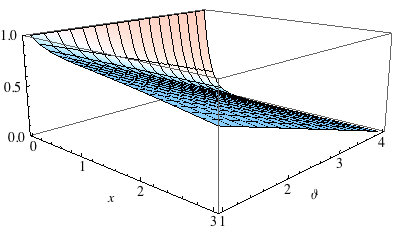}
  \includegraphics[scale=0.3]{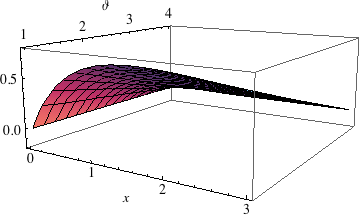}
 \end{center}
 \caption{The finite time ruin probability $p^Z$ (left), $p^X$ (middle), and $p^X-p^Z$ (right).}\label{fig:ruinprob}
\end{figure}

\begin{remark}
 Note that for the uncontrolled process an alternative way to calculate $p^Z$ is
 \begin{align*}
  p^Z=q p^Z_{\theta_1}+(1-q) p^Z_{\theta_2}\,.
 \end{align*}
\end{remark}


\section{Conclusion and open problems}
\label{sec:Concl}

We have presented a dividend optimization problem under partial information
where the drift coefficient of the diffusion firm value process
is a-priori unknown. We have shown how the drift coefficient can be 
estimated and we have derived the HJB equation for the 
stochastic optimal control problem.

It turns out that the optimal value function of the problem is the unique viscosity
solution of the HJB equation, which allows for a numerical treatment of the problem. The numerical method
gives an approximation to the optimal dividend policy and the corresponding
value function. The treated examples suggest that threshold strategies 
are the optimal ones and we have discussed those strategies in more detail.

Finally, we have derived a PDE for the finite time ruin probability in our model
and we have computed concrete examples numerically,
both for the uncontrolled and controlled process.
\\

As already mentioned in Section \ref{sec:Threshold}, a proof that there is
always an optimal strategy of threshold type is yet to be found. The results
of the numerical calculations suggest that this is the case. In addition,
the plots of the corresponding value functions look smooth, so that one
may have the hope that it is actually a classical solution to the
HJB equation.  Due to
the degeneracy of the diffusion coefficient, a proof for this is beyond reach
at the moment.  We formulate the following conjectures:

\begin{conjecture}
The optimal value function $V$ is $C^2$ and is a classical
solution to the HJB equation \eqref{eq:HJB} with boundary conditions 
\eqref{eq:BC1}, \eqref{eq:BC2}, \eqref{eq:BC3}.
Moreover, there always exists an admissible threshold strategy $u^{b}$
such that $V=J^{(b)}$.
\end{conjecture}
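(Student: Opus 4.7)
The plan would proceed in two complementary parts: (i) upgrade the viscosity solution $V$ from Section \ref{sec:Verif} to a classical $C^{2}$ solution of \eqref{eq:HJB}, and then (ii) read off a threshold function $b$ directly from $V_x$ that meets the hypotheses of Theorem \ref{th:J-visc-sol}, so that $V=J^{(b)}$ and optimality of $u^{b}$ follow automatically from the corollary to that theorem.

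For the regularity I would use a vanishing-viscosity regularisation: augment \eqref{eq:dyntheta} by an independent noise $\sqrt{2\varepsilon}\,dW^{\perp}_{t}$ (reflected or killed at $\{\theta_1,\theta_2\}$), and let $V^{\varepsilon}$ be the value function of the perturbed problem. The associated HJB equation is now uniformly elliptic in the interior, so Krylov's regularity theory for controlled diffusions with bounded measurable coefficients yields $V^{\varepsilon}\in C^{2,\alpha}_{\mathrm{loc}}(\Omega)$. Uniform-in-$\varepsilon$ Lipschitz estimates on $V^{\varepsilon}$ in $x$ and $\vartheta$ (obtained by a coupling/concatenation argument on the controlled trajectories, exploiting that the drift of $X$ and the diffusion coefficient of $\theta$ are bounded) combined with the comparison principle of Theorem \ref{thm:comparison} give $V^{\varepsilon}\to V$ uniformly on compact subsets of $\bar\Omega$. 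The delicate step is then to upgrade uniform convergence to $C^{2}$ convergence. A possibly cleaner alternative exploits the geometric structure: since $(X_t,\theta_t)$ is driven by a single Brownian motion, the process lives on a one-parameter family of deterministic characteristic curves. Passing to characteristic coordinates decouples \eqref{eq:HJB} into a family of one-dimensional HJB equations parametrised by the transverse coordinate; the non-degenerate $\sigma$-direction then yields classical regularity fibre-by-fibre, after which one has to patch together smoothness across characteristics.

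Given $V\in C^{2}$, I would set
\[
b(\vartheta):=\inf\{x\ge 0:\,V_x(x,\vartheta)\le 1\},
\]
with $b(\vartheta):=+\infty$ if the set is empty (excluded by $V\le K/\delta$ and the asymptotic behaviour from Lemma \ref{th:x-infinity}). Monotonicity of $V$ in $x$, together with strict concavity in $x$ on the no-action region $\{V_x>1\}$---where \eqref{eq:HJB} reduces to the linear equation $(\cL-\delta)V=0$ and the maximum principle applies---force $b$ to be single-valued, with boundary values $\bar b_1,\bar b_2$ at $\theta_1,\theta_2$ by \eqref{eq:BC3}. Continuity of $b$ follows from continuity of $V_x$, and the non-tangency condition $b'(\vartheta)\ne \sigma^{2}/((\theta_2-\vartheta)(\vartheta-\theta_1))$ needed to invoke admissibility of $u^{b}$ is obtained by a contradiction argument: tangency of the free boundary to the characteristic direction would allow one to propagate smoothness of $V$ across $\{V_x=1\}$ and derive an inconsistency in the matching of $V_x=1$ with the linear equation on either side. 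Once $u^{b}\in A_{M}$ and $J^{(b)}_x\le 1\Leftrightarrow x\ge b(\vartheta)$ in the viscosity sense have been verified (the latter from $J^{(b)}=V$ on the boundary of the free boundary and the defining relation of $b$), Theorem \ref{th:J-visc-sol} closes the argument.

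The \textbf{main obstacle} is unmistakably part (i). The HJB equation \eqref{eq:HJB} is degenerate elliptic throughout $\bar\Omega$, and the diffusion coefficient of $\theta_t$ vanishes precisely on the boundary $\{\vartheta=\theta_1,\theta_2\}$ where \eqref{eq:BC3} must be imposed, so both H\"ormander's hypoellipticity condition and standard boundary regularity theory break down there. Compounded with the free-boundary geometry of $\{V_x=1\}$---which gives the problem an obstacle-problem flavour---the $C^{2}$ regularity statement is beyond current techniques; this, as the authors themselves note in the remark preceding the conjecture, is exactly why the statement is left as a conjecture rather than a theorem.
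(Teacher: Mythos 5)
This statement is explicitly a \emph{conjecture}, not a theorem: the paper provides no proof, and the remark immediately preceding it states that proving smoothness is ``beyond the scope of this paper'' because the HJB equation is degenerate elliptic on the whole domain. You correctly recognize this in your final paragraph, so your ``proposal'' is really an outline of a plausible attack together with a diagnosis of why it fails, which is the right posture to take toward an open problem. That said, a few steps in your sketch contain more optimism than is warranted, beyond the main obstacle you acknowledge.

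On part (i): your vanishing-viscosity regularisation with $\sqrt{2\varepsilon}\,dW^{\perp}$ is a reasonable device, but observe that after reflecting or killing $\theta$ at $\theta_1,\theta_2$ the perturbed process is no longer a Bayesian filter, and more importantly the $C^{2,\alpha}_{\mathrm{loc}}$ estimates you get from Krylov degenerate as $\varepsilon\to 0$; uniform Lipschitz bounds do not transfer to bounds on $V^{\varepsilon}_{xx}$, $V^{\varepsilon}_{\vartheta\vartheta}$. Your characteristic-coordinate alternative is weaker than it first appears: the pair $(X_t,\theta_t)$ is a genuine two-dimensional diffusion driven by one Brownian motion, so it does \emph{not} live on a deterministic one-parameter family of curves (these would be characteristics only if the noise were absent). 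What is true is that the diffusion matrix has rank one, with null direction $(-(\theta_2-\vartheta)(\vartheta-\theta_1)/\sigma,\,\sigma)$, but regularity transversal to the diffusion direction --- which is exactly what ``patching across characteristics'' would require --- is precisely what the degeneracy prevents. In fact H\"ormander's condition does hold in the interior of $\Omega$ (the bracket of the diffusion vector field with the drift fills the missing direction), so the linear operator $\cL-\delta$ is hypoelliptic there; the real issue is the free boundary $\{V_x=1\}$, which cuts across both the domain and the boundary $\{\vartheta=\theta_1,\theta_2\}$, and for which hypoellipticity gives nothing.

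On part (ii): you implicitly assume $V$ is concave in $x$ to conclude that $b(\vartheta)=\inf\{x:V_x\le 1\}$ gives a single-valued threshold; concavity is proved in the one-dimensional Asmussen--Taksar case but is not established here, and the maximum principle on $\{V_x>1\}$ only gives uniqueness, not concavity. You also need $b\in C^5$ with the non-tangency condition $b'(\vartheta)\neq\sigma^2/((\theta_2-\vartheta)(\vartheta-\theta_1))$ to invoke the admissibility lemma of Section \ref{sec:Threshold}; continuity of $b$ (which is what continuity of $V_x$ would give) is far from sufficient, and your contradiction argument for non-tangency is at this point only heuristic. Deriving the viscosity characterization $J^{(b)}_x\le 1\Leftrightarrow x\ge b(\vartheta)$ from $V$ being $C^2$ is the unproblematic step. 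In short, your identification of part (i) as the main obstacle is accurate, but parts (ii) and the alternative route in (i) also contain nontrivial gaps, which is consistent with the statement remaining a conjecture.
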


\begin{conjecture}
Let $\bar b(\theta)$ denote the optimal threshold for the 
non-Bayesian case
and let $b$ be our optimal threshold function.
From Figure \ref{fig:thresholdplot} one would guess that 
\begin{enumerate}
\item $\min(\bar b(\theta_1),\bar b(\theta_2))\le b(\theta)\le \max(\bar b(\theta_1),\bar b(\theta_2))$;
in particular, if $\bar b(\theta_1)=\bar b(\theta_2)$ then  
$b$ does not depend on the Bayesian estimator;
\item if $0<\bar b(\theta_1)<\bar b(\theta_2)$, then $b$ is strictly increasing and strictly concave;
\item if $\bar b(\theta_1)>\bar b(\theta_2)>0$, then $b$ is strictly decreasing and strictly convex.
\end{enumerate} 
\end{conjecture}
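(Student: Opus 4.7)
Throughout, I assume the preceding conjecture, so that $V\in C^{2}$ and the optimal threshold can be characterized by the coincidence set
\[
b(\vartheta)=\inf\{x\ge 0 : V_{x}(x,\vartheta)\le 1\},
\]
as in Section~\ref{sec:Threshold}. The three parts of the conjecture then become, respectively, a sandwich on $V_{x}$, a monotonicity statement for $V_{x}$ in $\vartheta$, and a joint convex-analytic property of $V$.

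For part~(i), the plan is a viscosity comparison on $\bar\Omega$ using the classical-drift solutions $V_{i}^{\mathrm{det}}$ of \eqref{eq:BC3}. Consider the $\vartheta$-independent candidates $\overline V(x):=\max_{i}V_{i}^{\mathrm{det}}(x)$ and $\underline V(x):=\min_{i}V_{i}^{\mathrm{det}}(x)$. Since these are constant in $\vartheta$, the $V_{x\vartheta}$ and $V_{\vartheta\vartheta}$ contributions to $\cL$ drop, and on each region where one $V_{i^{*}}^{\mathrm{det}}$ dominates, the HJB residual reduces to $(\vartheta-\theta_{i^{*}})\,\partial_{x}V_{i^{*}}^{\mathrm{det}}$; where this residual has the wrong sign for the sub/supersolution inequality, a small $\vartheta$-dependent corrector vanishing at $\vartheta\in\{\theta_{1},\theta_{2}\}$ needs to be added. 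Theorem~\ref{thm:comparison} then yields $\underline V\le V\le\overline V$, and comparing coincidence sets gives $\min(\bar b_{1},\bar b_{2})\le b(\vartheta)\le\max(\bar b_{1},\bar b_{2})$; in particular $\bar b_{1}=\bar b_{2}$ forces $b\equiv\bar b_{1}$.

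For parts~(ii) and~(iii), the key step is to establish a ``value of information'' convexity of $\vartheta\mapsto V(x,\vartheta)$ together with a sign for the cross derivative $V_{x\vartheta}$ tracking whether $\bar b_{1}<\bar b_{2}$ or $\bar b_{1}>\bar b_{2}$. Convexity should follow from the Feynman--Kac representation of $V$ under an $\varepsilon$-optimal control, because mixing two posteriors amounts to discarding information about the true drift; the drift $\vartheta-u$ in \eqref{eq:dynX2} is linear in $\vartheta$, so a pathwise coupling of the filter SDE \eqref{eq:dyntheta} across initial $\vartheta$ is available. The sign of $V_{x\vartheta}$, read off from the boundary slopes $V_{x}(\cdot,\theta_{i})$ at the coincidence points $\bar b_{i}$, determines the monotonicity direction of $b$. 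The concavity (case~(ii)) or convexity (case~(iii)) of $b$ is then extracted from the level set $\{V_{x}=1\}$ via the implicit function theorem, using joint convexity of $V$ in $(x,\vartheta)$.

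The central obstacle is the degeneracy of $\cL-\delta$. With $(X,\theta)$ driven by a single Brownian motion, the diffusion is rank one in a two-dimensional state space, and its $\vartheta$-part $\frac{1}{\sigma}(\vartheta-\theta_{1})(\theta_{2}-\vartheta)$ vanishes exactly on the $\vartheta$-boundary where all comparisons are anchored. This removes both the classical elliptic regularity needed to justify $V\in C^{2}$ and the strong maximum principle needed to upgrade weak monotonicity and convexity to the \emph{strict} versions required in~(ii)--(iii). A full argument would likely demand a H\"ormander-type hypoellipticity analysis tailored to the filter dynamics, and without it one cannot exclude pathological interior behavior of the free boundary; this is presumably the reason the authors record these properties only as conjectures.
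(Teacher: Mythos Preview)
The paper does not prove this statement at all: it is explicitly recorded as an open \emph{conjecture} in the concluding section, and the authors offer no argument beyond the numerical evidence of Figure~\ref{fig:thresholdplot}. So there is nothing to compare your proposal against, and your own closing paragraph correctly anticipates this --- the degeneracy of the diffusion is precisely why the paper leaves the claim unproven.

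That said, treating your proposal on its own merits as a research sketch, there is a concrete gap in the plan for part~(i) that goes beyond the degeneracy issue. Suppose the comparison $\underline V\le V\le\overline V$ on $\bar\Omega$ could be established. The threshold $b(\vartheta)$ is defined through the \emph{derivative} condition $V_{x}=1$, not through the value of $V$ itself, and a pointwise ordering of functions that agree at $x=0$ does not yield a pointwise ordering of their $x$-derivatives in the interior. ``Comparing coincidence sets'' therefore does not follow from the value comparison alone; you would need a separate argument that $V_{x}$ is sandwiched between the $\partial_{x}V_{i}^{\mathrm{det}}$, which is a substantially harder statement. The corrector you mention would also have to be constructed explicitly, since the residual $(\vartheta-\theta_{i^{*}})\,\partial_{x}V_{i^{*}}^{\mathrm{det}}$ changes sign across the region where the two deterministic solutions cross, and it is not clear a single smooth corrector handles both regimes.

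For parts~(ii) and~(iii), the ``discarding information'' heuristic is appealing, but note that your proposed implicit-function argument for the curvature of $b$ invokes \emph{joint} convexity of $V$ in $(x,\vartheta)$, whereas the information argument only targets convexity in $\vartheta$ alone. There is no reason to expect $V$ to be convex in $x$ --- indeed, in the deterministic case $V_{i}^{\mathrm{det}}$ is concave in $x$ above the threshold --- so the Hessian of $V$ will typically be indefinite, and the level-set geometry of $\{V_{x}=1\}$ cannot be read off from convexity considerations in this way.
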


\appendix
\section{Supplementary proofs}

\begin{proof}[Proof of Theorem \ref{thm:viscosity}]
 We have to show that $V$ is both a viscosity sub- and supersolution.\\

\underline{$V$ is a viscosity supersolution:} Let $\psi \in C^2(\bar{\Omega})$, $\psi \le V$ and $(\bar{x},\bar{\vartheta})$ such that $V(\bar{x},\bar{\vartheta})=\psi(\bar{x},\bar{\vartheta})$.\\
  From the dynamic programming principle we have
  \begin{align*}
   \psi(\bar{x},\bar{\vartheta})=V(\bar{x},\bar{\vartheta}) &= \sup_{u \in [0,K]} \E_{\bar{x},\bar{\vartheta}} \left( \int_0^{\tau \wedge \eta} e^{-\delta t} u_t \,dt + e^{-\delta(\tau \wedge \eta)}V(X_{\tau \wedge \eta}, \theta_{\tau \wedge \eta})\right)\\
   &\ge \E_{\bar{x},\bar{\vartheta}} \left( u \frac{1-e^{-\delta(\tau \wedge \eta)}}{\delta} + e^{-\delta(\tau \wedge \eta)} \psi (X_{\tau \wedge \eta}, \theta_{\tau \wedge \eta})\right)
  \end{align*}
  for any fixed $u \in [0,K]$.\\
  Applying It\^o's formula to $\psi$, noting that the stochastic integrals are martingales, dividing by $\eta$ and letting $\eta \rightarrow 0$ gives
  \[
   0 \ge u-\delta \psi (\bar{x},\bar{\vartheta}) + \cL \psi (\bar{x},\bar{\vartheta}) -u \, \psi_x (\bar{x},\bar{\vartheta})\,.
  \]
  Since $u$ was arbitrary we get
  \[
   0 \ge -\delta \psi (\bar{x},\bar{\vartheta}) + \cL \psi (\bar{x},\bar{\vartheta}) + \sup_{u \in [0,K]} \left(u (1-\psi_x (\bar{x},\bar{\vartheta}))\right)\,.
  \]
  So $V$ is a viscosity supersolution.\\
  
  \underline{$V$ is a viscosity subsolution:}  For $\varepsilon > 0$ let $\eta > 0$ with $\varepsilon < \eta^2$.\\
Let $u^{\varepsilon}$ be the density of an $\varepsilon$-optimal dividend
policy, i.e., $J^{(u^{\varepsilon})} \geq V - \varepsilon,$ and denote the firm
value coming from $u^{\varepsilon}$ as $X^{\varepsilon}$. Furthermore, let
$\phi \in C^2 (\bar \Omega)$, $\phi \geq V$ and $(\bar{x}, \bar{\vartheta})$ such that $\phi (\bar{x}, \bar{\vartheta}) = V(\bar{x},
\bar{\vartheta})$.

\begin{align*}
\phi(\bar{x}, \bar{\vartheta}) - \varepsilon &= V(\bar{x}, \bar{\vartheta}) - \varepsilon \le \E_{\bar{x},\bar{\vartheta}} \left( \int_{0}^{\tau \wedge \eta} e^{- \delta t} u_t^{\varepsilon} \, d t + e^{-\delta(\tau \wedge \eta)} V(X_{\tau \wedge \eta}^{\varepsilon}, \theta_{\tau \wedge \eta})\right)\\
& \le \E_{\bar{x},\bar{\vartheta}} \left( \int_0^{\tau \wedge \eta} e^{- \delta t} u_t^{\varepsilon} dt + e^{-\delta(\tau \wedge \eta)} \phi (X_{\tau \wedge \eta}^{\varepsilon}, \theta_{\tau \wedge \eta})\right)\\
&= \E_{\bar{x},\bar{\vartheta}} \left( \int_{0}^{\tau \wedge \eta} e^{-\delta t} u_t^{\varepsilon} dt + e^{- \delta (\tau \wedge \eta)} \left(\phi (\bar{x}, \bar{\vartheta})+ \int_0^{\tau \wedge \eta} \cL \phi \, dt - \int_0^{\tau \wedge \eta} \phi_x u_t^{\varepsilon} \, dt \right)\right)\,,
\end{align*}

where we applied It\^o's formula and used that the stochastic integrals are martingales. Now, we divide by $\eta$ and let $\eta \rightarrow 0$. Since $\varepsilon < \eta^2, \varepsilon \rightarrow 0$ and $\frac{\varepsilon}{\eta} \rightarrow 0$. Thus,
\[
(\cL - \delta) \phi + u^{\varepsilon} (1- \phi_x) \geq 0\,.
\]
Since $u^{\varepsilon}(1-\phi_x) \leq \sup_{u \in [0, K]} u (1- \phi_x)$, we get 
\[
(\cL - \delta) \phi + \sup_{u \in [0, K]} u (1- \phi_x) \geq 0\,.
\]
So V is also a viscosity subsolution.\\

In total $V$ is a viscosity solution of the HJB equation.
\end{proof}

\begin{proof}[Proof of Theorem \ref{thm:comparison}]
We are going to prove the statement of the theorem by contradiction, using standard arguments from \cite{crandall1992} and \cite{pham2009} adapted to our specific situation.\\
Suppose there exists $(x_0,\vartheta_0)\in\bar{\Omega}$ such that
\begin{align*}
w(x_0,\vartheta_0)-v(x_0,\vartheta_0)>0.
\end{align*}
Since $v$ and $w$ are assumed to be bounded, we have
\begin{align*}
0<\sup_{\bar{\Omega}}(w(x,\vartheta)-v(x,\vartheta))=:M<\infty.
\end{align*}
Now on $\partial\bar{\Omega}$ we already have $w\leq v$ by assumption, and 
$\limsup_{x\rightarrow\infty}(w-v)(x,\vartheta)\le 0$ uniformly in $\vartheta$.
Because of that and the  continuity of $w$ and $v$, we have that
the maximum of $w-v$ needs to be attained at an interior point of $\Omega$ 
with finite
$x$-component. 
Therefore there exists $B\in(0,\infty)$ and $(\bar{x},\bar{\vartheta})\in(0,B)\times(\theta_1,\theta_2)$, such that
\begin{align*}
M=w(\bar{x},\bar{\vartheta})-v(\bar{x},\bar{\vartheta}).
\end{align*}
Define for $\alpha>0$,
\begin{align*}
\Phi_{\alpha}(x,\vartheta,y,\rho)=w(x,\vartheta)-v(y,\rho)-\frac{1}{2\alpha}((\vartheta-\rho)^2+(x-y)^2),
\end{align*}
for $(x,\vartheta,y,\rho)\in[0,B]\times[\theta_1,\theta_2]\times[0,B]\times[\theta_1,\theta_2]$.
The function $\Phi_{\alpha}$ is again continuous and it attains a maximum on its compact domain at some point $z_\alpha=(x_\alpha,\vartheta_\alpha,y_\alpha,\rho_\alpha)$.
Furthermore we have
\begin{align*}
0<M&=\Phi_{\alpha}(\bar{x},\bar{\vartheta},\bar{x},\bar{\vartheta})\leq M_\alpha:=\Phi_{\alpha}(z_\alpha)\\
&\leq w(x_\alpha,\vartheta_\alpha)-v(y_\alpha,\rho_\alpha)\leq M.
\end{align*}
The sequence $\{z_\alpha\}_{\alpha>0}$ on $[0,B]\times[\theta_1,\theta_2]\times[0,B]\times[\theta_1,\theta_2]$ is bounded,
therefore there exists a subsequence which converges to some value $\tilde{z}$ when $\alpha\to 0$. At the same time $M_\alpha$ is bounded as well which implies that
\begin{align*}
\frac{1}{2\alpha}((\vartheta_\alpha-\rho_\alpha)^2+(x_\alpha-y_\alpha)^2)
\end{align*}
is bounded as $\alpha\to 0$.  This implies that in the limit $\tilde{z}=(\tilde{x},\tilde{\vartheta},\tilde{x},\tilde{\vartheta})$, and directly from the inequality above we have
$\tilde{x}=\bar{x}$ and $\tilde{\vartheta}=\bar{\vartheta}$. In addition we obtain, at least along another subsequence $M_\alpha\to M$ and
$\frac{1}{2\alpha}((\vartheta_\alpha-\rho_\alpha)^2+(x_\alpha-y_\alpha)^2)\to 0$.\\
Without loss of generality we can assume that we already deal with the convergent subsequence
and, since $\bar{z}$ is an interior point, that $\{z_\alpha\}_{\alpha>0}\in(0,B)\times(\theta_1,\theta_2)\times(0,B)\times(\theta_1,\theta_2)$.\\
\\
In the following step we are going to apply Ishii's Lemma in the form it is stated in \cite[Theorem 3.2]{crandall1992}.
For this purpose we set
\begin{align*}
F^\alpha(x,\vartheta,y,\rho)=\frac{1}{2\alpha}((\vartheta-\rho)^2+(x-y)^2),
\end{align*}
for $(x,\vartheta),(y,\rho)\in[0,B]\times[\theta_1,\theta_2]$.
We have that $w(x,\theta)-v(y,\rho)-F^\alpha(x,\theta,y,\rho)$ attains a maximum in $z_\alpha=(x_\alpha,\vartheta_\alpha,y_\alpha,\rho_\alpha)$. At these points we have
\begin{align*}
D_{x,\theta}F^\alpha(x_\alpha,\vartheta_\alpha,y_\alpha,\rho_\alpha)=-D_{y,\rho}F^\alpha(x_\alpha,\vartheta_\alpha,y_\alpha,\rho_\alpha)=\left(
\begin{array}{c}
\frac{x_\alpha-y_\alpha}{\alpha}\\
\frac{\vartheta_\alpha-\rho_\alpha}{\alpha}                            
\end{array}
\right),
\end{align*}
and with $I_2$ denoting the $2\times 2$ identity matrix we can write
\begin{align*}
D^2 F^\alpha(x_\alpha,\vartheta_\alpha,y_\alpha,\rho_\alpha)=\frac{1}{\alpha}
\left(\begin{array}{c c}
I_2 & -I_2\\
-I_2 & I_2
\end{array}
\right).
\end{align*}
From \cite[Theorem 3.2]{crandall1992} we obtain, for every $\varepsilon>0$, that there
exist symmetric $2\times 2$ matrices $X$ and $Y$ such that
$(D_{x,\theta}F^\alpha(z_\alpha),X)\in \bar{J}^{2,+}_{\bar{\Omega}}w(x_\alpha,\vartheta_\alpha)$, which is the so-called superjet of $w$ at $(x_\alpha,\vartheta_\alpha)$, and 
$(-D_{y,\rho}F^\alpha(z_\alpha),Y)\in \bar{J}^{2,-}_{\bar{\Omega}}v(y_\alpha,\rho_\alpha)$, which is the so-called subjet of $v$ at $(y_\alpha,\rho_\alpha)$.
In particular these matrices fulfill:
\begin{align*}
\left(\begin{array}{c c}
X & 0\\
0 &-Y
\end{array}
\right)\leq \frac{1}{\alpha}
\left(\begin{array}{c c}
I_2 & -I_2\\
-I_2 & I_2
\end{array}
\right) +\frac{\varepsilon}{\alpha^2} \left(\begin{array}{c c}
I_2 & -I_2\\
-I_2 & I_2
\end{array}
\right)^2.
\end{align*}
Choosing $\varepsilon=\alpha$ and taking the square of the matrix explicitly we get
\begin{align}\label{eq:ishii_in}
\left(\begin{array}{c c}
X & 0\\
0 &-Y
\end{array}
\right)\leq \frac{3}{\alpha}
\left(\begin{array}{c c}
I_2 & -I_2\\
-I_2 & I_2
\end{array}
\right)\,.
\end{align}
Before using these super-subjet properties, we are going to derive a bound for the second order terms occurring in the HJB equation.\\
Define $a(x,\vartheta)^T=(\sigma,\frac{(\vartheta-\theta_1)(\theta_2-\vartheta)}{\sigma})$ and write
\begin{align*}
&X=\left(
\begin{array}{cc}
x_{11}  &  x_{12}\\
x_{12}  &  x_{22}                      
\end{array}
\right)\,,\quad Y=\left(
\begin{array}{cc}
y_{11}  &  y_{12}\\
y_{12}  &  y_{22}                      
\end{array}
\right)\,,\\
&\Sigma=\left(
\begin{array}{cc}
a(x_\alpha,\vartheta_\alpha)a(x_\alpha,\vartheta_\alpha)^T  & a(x_\alpha,\vartheta_\alpha)a(y_\alpha,\rho_\alpha)^T \\
a(y_\alpha,\rho_\alpha)a(x_\alpha,\vartheta_\alpha)^T  & a(y_\alpha,\rho_\alpha)a(y_\alpha,\rho_\alpha)^T                        
\end{array}
\right)\,.
\end{align*}
Now we are going to use inequality \eqref{eq:ishii_in},
\begin{align}
\trace&[a(x_\alpha,\vartheta_\alpha)a(x_\alpha,\vartheta_\alpha)^TX-a(y_\alpha,\rho_\alpha)a(y_\alpha,\rho_\alpha)^TY] \nonumber\\
=&x_{11}\sigma^2+2x_{12}(\vartheta_{\alpha}-\theta_1)(\theta_2-\vartheta_{\alpha})+x_{22}\frac{(\vartheta_{\alpha}-\theta_1)^2(\theta_2-\vartheta_{\alpha})^2}{\sigma^2} \nonumber\\
&-y_{11}\sigma^2-2y_{12}(\rho_{\alpha}-\theta_1)(\theta_2-\rho_{\alpha})-y_{22}\frac{(\rho_{\alpha}-\theta_1)^2(\theta_2-\rho_{\alpha})^2}{\sigma^2} \nonumber\\
=&\trace\left[\Sigma\left(\begin{array}{c c}
X & 0\\
0 &-Y
\end{array}
\right)\right]\leq\frac{3}{\alpha}\trace\left[\Sigma\left(\begin{array}{c c}
I_2 & -I_2\\
-I_2 & I_2
\end{array}
\right)\right] \nonumber\\
=&\frac{3}{\alpha}\trace[(a(x_\alpha,\vartheta_\alpha)-a(y_\alpha,\rho_\alpha))(a(x_\alpha,\vartheta_\alpha)-a(y_\alpha,\rho_\alpha) )^T] \nonumber\\
=& \label{eq:help3} \frac{3}{\alpha} (\vartheta_\alpha-\rho_\alpha)^2\frac{(\vartheta_\alpha-\theta_1-\theta_2+\rho_\alpha)^2}{\sigma^2}.
\end{align}
The super-subjet notions appear in an equivalent formulation of the viscosity solution property based on second-order super and subdifferentials,
see \cite[Lemma 4.1, p. 211]{fleming2006} or \cite[Section 2]{crandall1992}.\\
Since $w$ is a viscosity subsolution of \eqref{eq:HJB}, the statement
$(D_{x,\theta}F^\alpha(z_\alpha),X)\in \bar{J}^{2,+}_{\bar{\Omega}}w(x_\alpha,\vartheta_\alpha)$
is equivalent to the existence of a subsolution test function at $(x_\alpha,\vartheta_\alpha)$ with first derivative equal to $D_{x,\theta}F^\alpha(z_\alpha)$ and second
derivative equal to $X$.
At the same time the statement $(-D_{y,\rho}F^\alpha(z_\alpha),Y)\in \bar{J}^{2,-}_{\bar{\Omega}}v(y_\alpha,\rho_\alpha)$ is equivalent to the existence of
a supersolution test function, again with the first derivative given by $-D_{y,\rho}F^\alpha(z_\alpha)$ and second derivative equal to $Y$.\\
Therefore, from the subsolution property of $w$ and the supersolution property of $v$ (using the above-mentioned derivatives for the respective test functions)  we derive
\begin{align*}
-\delta w(x_\alpha,\vartheta_\alpha)+\vartheta_\alpha\frac{(x_\alpha-y_\alpha)}{\alpha}
+\frac{1}{2}\trace[a(x_\alpha,\vartheta_\alpha)a(x_\alpha,\vartheta_\alpha)^TX]+\sup_{u\in[0,K]}u(1-\frac{(x_\alpha-y_\alpha)}{\alpha})\geq 0\,,\\
-\delta v(y_\alpha,\rho_\alpha)+\rho_\alpha\frac{(x_\alpha-y_\alpha)}{\alpha}
+\frac{1}{2}\trace[a(y_\alpha,\rho_\alpha)a(y_\alpha,\rho_\alpha)^TY]+\sup_{u\in[0,K]}u(1-\frac{(x_\alpha-y_\alpha)}{\alpha})\leq 0.
\end{align*}
Rearranging and using \eqref{eq:help3} yields
\begin{align*}
\delta(w(x_\alpha,\vartheta_\alpha)-v(y_\alpha,\rho_\alpha))
&\leq \frac{(x_\alpha-y_\alpha)(\vartheta_\alpha-\rho_\alpha)}{\alpha}+\frac{1}{2}\trace\left[\Sigma\left(\begin{array}{c c}
X & 0\\
0 &-Y
\end{array}
\right)\right]\\
&\leq \frac{(x_\alpha-y_\alpha)(\vartheta_\alpha-\rho_\alpha)}{\alpha}+\frac{3}{2\alpha}
(\vartheta_\alpha-\rho_\alpha)^2\frac{(\vartheta_\alpha-\theta_1-\theta_2+\rho_\alpha)^2}{\sigma^2}\\
&\leq \frac{\max\{(x_\alpha-y_\alpha)^2,(\vartheta_\alpha-\rho_\alpha)^2\}}{\alpha}+\frac{3}{2\alpha}
(\vartheta_\alpha-\rho_\alpha)^2\frac{(\vartheta_\alpha-\theta_1-\theta_2+\rho_\alpha)^2}{\sigma^2}.
\end{align*}
In the above inequality the left-hand side converges to $\delta M$ as $\alpha\to0$.
Since $F^\alpha(z_\alpha)\to0$ at the same time (other terms are bounded), the right-hand side converges to $0$,
resulting in the contradiction $\delta M\leq 0$, which concludes the proof of the theorem.
\end{proof}

\begin{proof}[Proof of Theorem \ref{thm:verification}]
Define $\varphi(x,\vartheta) := \frac{1}{\pi} e^{-(x^2+\vartheta^2)}$ and let
\[
 \varphi^n(x,\vartheta) := n^2 \int_{-\infty}^\infty \int_{- \infty}^\infty v(x-s,\vartheta-t) \varphi(ns,nt)\, ds\, dt\,.
\]
Note that as $n \to \infty$,
$\varphi^n \to v$ and $\cL \varphi^n \to \cL v$, see \citet{wheeden1977}.\\

Let $u=(u_t)_{t \ge 0}$ be an admissible strategy and let $T\in\N$. Then
 \begin{align*}
\lefteqn{e^{-\delta (T \wedge \tau)} \varphi^n (X_{T \wedge \tau},\theta_{T \wedge \tau})}\\
=& \varphi^n(x,\vartheta) + \int_0^{T \wedge \tau} e^{-\delta t}\, d\varphi^n(X_t,\theta_t) + \int_0^{T \wedge \tau} \varphi^n(X_t,\theta_t)\, d(e^{-\delta t})\\
=& \varphi^n(x,\vartheta) + \int_0^{T \wedge \tau}  e^{-\delta t} \left[ -\delta \varphi^n(X_t,\theta_t) + \cL \varphi^n (X_t, \theta_t) - u_t \varphi_x^n (X_t, \theta_t) \right]\,dt\\
+& \int_0^{T \wedge \tau} e^{-\delta t} \left[ \sigma  \varphi_x^n + \frac{(\theta_t-\theta_1)(\theta_2-\theta_t)}{\sigma} \varphi_\vartheta^n \right]\, dW_t\,.
\end{align*}

After taking expectations, the stochastic integral vanishes. Therefore,
\begin{align*}
& \E_{x,\vartheta} \left( e^{-\delta (T \wedge \tau)} \varphi^n (X_{T \wedge \tau},\vartheta_{T \wedge \tau})\right)\\
&= \varphi^n(x,\vartheta) + \E_{x,\vartheta} \left( \int_0^{T \wedge \tau}  e^{-\delta t} \left[ -\delta \varphi^n(X_t,\theta_t) + \cL \varphi^n (X_t, \theta_t) - u_t \varphi_x^n (X_t, \theta_t) \right]\,dt\right)\,.
\end{align*}

Let $\varepsilon>0$. Since $v$ fulfills
\[
 -\delta v + \cL v + (1-v_x) u \le 0 \quad \mbox{a.e.}
\]
we can choose  $n$ large enough such that 
\[
 -\delta \varphi^n + \cL \varphi^n + (1-\varphi^n_x) u \le \varepsilon\,,
\]
and hence
\[
 \cL \varphi^n \le \delta \varphi^n - (1-\varphi^n_x) u + \varepsilon\,.
\]
Therefore we get
\begin{align*}
&\E_{x,\vartheta} \left( e^{-\delta (T \wedge \tau)} \varphi^n (X_{T \wedge \tau},\theta_{T \wedge \tau})\right)\\
&\le \varphi^n(x,\vartheta) + \E_{x,\vartheta} \left( \int_0^{T \wedge \tau}  e^{-\delta t} \left[ -\delta \varphi^n(X_t,\theta_t) + \delta \varphi^n(X_t,\theta_t) - (1-\varphi_x^n(X_t,\theta_t)) u_t + \varepsilon - u_t \varphi_x^n (X_t, \theta_t) \right]\,dt\right)\\
&= \varphi^n(x,\vartheta) - \E_{x,\vartheta} \left( \int_0^{T \wedge \tau}  e^{-\delta t} u_t \,dt + \varepsilon \int_0^{T \wedge \tau}  e^{-\delta t} \,dt\right)\,.
\end{align*}

Letting $n\rightarrow\infty$, we get by dominated convergence
\begin{align*}
\E_{x,\vartheta} \left( e^{-\delta (T \wedge \tau)} v (X_{T \wedge \tau},\theta_{T \wedge \tau})\right)
\le v(x,\vartheta) - \E_{x,\vartheta} \left( \int_0^{T \wedge \tau}  e^{-\delta t} u_t \,dt + \varepsilon \int_0^{T \wedge \tau}  e^{-\delta t} \,dt\right)\,.
\end{align*}
As $\varepsilon$ was arbitrary, we further get
\[
\E_{x,\vartheta} \left( e^{-\delta (T \wedge \tau)} v (X_{T \wedge \tau},\theta_{T \wedge \tau})\right) \le v(x,\vartheta) - \E_{x,\vartheta} \left( \int_0^{T \wedge \tau}  e^{-\delta t} u_t \,dt\right)\,,
\]
and hence
\[
\E_{x,\vartheta} \left( e^{-\delta (T \wedge \tau)} v (X_{T \wedge \tau},\theta_{T \wedge \tau})\right) + \E_{x,\vartheta} \left( \int_0^{T \wedge \tau}  e^{-\delta t} u_t \,dt\right) \le v(x,\vartheta)\,.
\]

From the supersolution property we have that $\E_{x,\vartheta} \left(
e^{-\delta (T \wedge \tau)} v (X_{T \wedge \tau},\theta_{T \wedge \tau})\right)
\ge 0$. Thus we have, by Fatou's Lemma, 
\[
  J^{(u)}(x,\vartheta)=\E_{x,\vartheta} \left( \int_0^{\tau}  e^{-\delta t} u_t \,dt\right) 
\le \liminf_{T\rightarrow \infty} \E_{x,\vartheta} \left( \int_0^{\tau\wedge T}  e^{-\delta t} u_t \,dt\right) 
\le v(x,\vartheta)\,.
\]
So for each control $u$, $v$ dominates the value function. Taking the supremum over $u \in [0,K]$ in the derivation, we get equality in the HJB equation and therefore
\[
 V(x,\vartheta) \le v(x,\vartheta)\,,
\]
which completes the proof.
\end{proof}



\end{document}